% !TeX spellcheck = en_US

\documentclass[12pt]{article}

%%% ????????????
\usepackage{hyperref}

\usepackage{amsmath} 
\usepackage{amssymb}
\usepackage{amsfonts}
\usepackage{amsthm}
\usepackage{graphicx}

\textwidth 175mm \textheight 230mm \topmargin -10mm \oddsidemargin
-5mm

\newtheorem{theorem}{Theorem}
\newtheorem{lemma}{Lemma}
\newtheorem{corollary}{Corollary}

\theoremstyle{definition}

\begin{document}
	\begin{center}
		\Large
	\textbf{Superoperator master equations for depolarizing dynamics}
	
		\large 
		\textbf{A.E. Teretenkov}\footnote{Department of Mathematical Methods for Quantum Technologies, Steklov Mathematical Institute of Russian Academy of Sciences,
			ul. Gubkina 8, Moscow 119991, Russia\\ E-mail:\href{mailto:taemsu@mail.ru}{taemsu@mail.ru}}
		\end{center}
		
			\footnotesize
			The work is devoted to superoperator master equations. Namely, the  superoperator master equations in the case of the twirling hyperprojector  with respect to the whole unitary group are derived. To be consistent with such a  hyperprojector the free dynamics is assumed to be depolarizing. And it is perturbed by the arbitrary  Gorini--Kossakowski--Sudarshan--Lindblad generator.  The explicit form of the second order master equations are presented in this case.
			\normalsize

\section{Introduction}

A widely used approach to derivation of master equations in non-equilibrium statistical physics is Nakajima-Zwanzig projection formalism \cite{Nakajima1958, Zwanzig1960}. In the theory of open quantum system  it is usually applied to equations for a density matrix. But in  \cite{Teretenkov2022Effective, Teretenkov2024} was developed a modification of such formalism applying  projection methods to  equations for propagators instead of density matrices.It leads to analogs of time-convolutionless master equations \cite{Fulinski1967, Shibata1977, Breuer1999, Breuer2001} which we call superoperator master equations \cite{Teretenkov2024}. We use the term ''hyperprojector'' for projectors which map superoperators to superoperators.  In \cite{Teretenkov2022Effective, Teretenkov2024} the hyperprojector of averaging with respect to free unitary dynamics was considered. In this work we consider another case of hyperprojector. Namely, we consider a hyperprojector of twirling with respect to the whole unitary group. Such a hyperprojector maps any dynamical map to a depolarizing channel, so the superoperator master equations take possibly one of the most simple forms in this case. But this simple case is still non-trivial and some observations in this case can motivate further generalizations. Let us also remark, that the quantum information properties of depolarizing channels are well studied \cite{King2003, Datta2003, Datta2006, Amosov2007}.

In Section \ref{se:superop} we recall the main ideas behind superoperator master equations and represent them in the Schroedinger picture instead of the interaction one. In Section \ref{eq:twirling} we define the twirling hyperpojector and summarize some properties for twirling hyperprojector  with respect to whole unitary group. In Section \ref{se:dynamics} we obtain superoperator master equations  with this specific hyperprojector. Namely, we perturbate a generator of  the semigroup of depolarizing channels with a Gorini--Kossakowski--Sudarshan--Lindblad (GKSL)  generator  \cite{Gorini1976, Lindblad1976} of general form  and obtain explicit form of the second order superoperator master equations in such a case.

In Conclusions we summarize our work and suggest possible directions for further study.  

\section{Superoperator master equations}
\label{se:superop}

In \cite{Teretenkov2024} the superoperator master equations were introduced. Let us recall  the main idea behind them and some results which we need in the present work. The usual projection methods are applied to the equation for the density matrix in the interaction picture
\begin{equation}\label{eq:basicDiffEq}
	\frac{d}{dt} \rho_I(t) = \lambda \mathcal{L}_I(t)\rho_I(t).
\end{equation}
In our work we will consider the case, where $\rho_I(t)$ is a finite-dimensional matrix and $\mathcal{L}_I(t)$ is a continuously differentiable superoperator-valued  function of $t$. In physics applications Equation \eqref{eq:basicDiffEq} is usually a Liouville-von Neumann equation, but the case, where Equation \eqref{eq:basicDiffEq} is a GKSL  equation or its transformation to interaction picture are also actively discussed in recent literature \cite{Saideh2020, Finkelstein-Shapiro2020, Regent2023, Regent2023a, Karasev2023, Meretukov2024}. The solution of Equation \eqref{eq:basicDiffEq} can be represented in the form
\begin{equation*}
	\rho_I(t) = \Phi_I(t, t_0) \rho_I(t_0),
\end{equation*}
where $\Phi_I(t, t_0)$ is a Cauchy matrix or propagator, which is defined as a solution of the Cauchy problem 
\begin{equation}\label{eq:propDef}
	\frac{d}{dt}\Phi_I(t, t_0) = \lambda \mathcal{L}_I(t) \Phi_I(t, t_0). \qquad  \Phi_I(t_0, t_0) = \mathcal{I},
\end{equation}
where $ \mathcal{I}$ is the identity superoperator. The main idea behind superoperator master equations is to apply the projection methods to Equation \eqref{eq:propDef} instead of Equation \eqref{eq:basicDiffEq}. The projectors in such a case map superoperators to superoperators, so we call them hyperprojectors. If one assumes
\begin{equation}\label{eq:identityPreservation}
	\mathfrak{P}( \mathcal{I}) =  \mathcal{I},
\end{equation}
then one obtains the homogeneous linear differential equation for projected dynamics 
\begin{equation}\label{eq:superoperatorMasterEquation}
	\frac{d}{dt}\mathfrak{P}(\Phi_I(t, t_0)) =\mathfrak{K}^I(t) \mathfrak{P}(\Phi_I(t, t_0)),  
\end{equation}
with initial condition $ \mathfrak{P}(\Phi_I(t_0, t_0)) =  \mathcal{I}$. We call Equation \eqref{eq:superoperatorMasterEquation} a superoperator master equation.

$\mathfrak{K}^I(t) $ can be calculated perturbatively using the following theorem \cite[Proposition 1]{Teretenkov2024}.

\begin{theorem}\label{th:cumulantExpansion}
	Asymptotic expansion at fixed $t$ and for $ \mathfrak{K}^I(t) $ as $ \lambda \rightarrow 0 $ has the form
	\begin{equation}\label{eq:cumulantExpansion}
		\mathfrak{K}^I(t) = \sum_{k=1}^{\infty} \lambda^n \mathfrak{K}_k^I(t),
	\end{equation}
	where  $  \mathfrak{K}_n(t) $ are defined as
	\begin{equation}\label{eq:coeffKn}
		\mathfrak{K}_n^I(t) \equiv  \sum_{q=0}^{k-1} (-1)^q \sum_{\sum_{j=0}^q k_j =k, k_j \geqslant 1}  \dot{\mathfrak{M}}_{k_0}^I(t) \mathfrak{M}_{k_1}^I(t)  \ldots \mathfrak{M}_{k_{q}}^I(t),
	\end{equation}
	where the condition $\sum_{j=0}^q k_j =n, k_j \geqslant 1$ means the sum runs over all compositions of the number $n$, and
	\begin{equation}\label{eq:momentsDef}
		\mathfrak{M}_{k}^I(t) \equiv \int_{t_0}^t d t_k  \ldots  \int_{t_0}^{t_2} d t_1  \mathfrak{P}  ( \mathcal{L}_I(t_k) \ldots \mathcal{L}_I(t_1)), \qquad \dot{\mathfrak{M}}_{k}^I(t) \equiv \frac{d}{dt} \mathfrak{M}_{k}^I(t).
	\end{equation}
\end{theorem}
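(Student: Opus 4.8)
The plan is to read off $\mathfrak{K}^I(t)$ in closed form from the superoperator master equation \eqref{eq:superoperatorMasterEquation} as $\mathfrak{K}^I(t) = \dot{\mathfrak{M}}^I(t)\,(\mathfrak{M}^I(t))^{-1}$, where $\mathfrak{M}^I(t) \equiv \mathfrak{P}(\Phi_I(t,t_0))$, and then to extract the coefficient of $\lambda^n$ by expanding the inverse as a geometric series. The identity is immediate: \eqref{eq:superoperatorMasterEquation} states $\dot{\mathfrak{M}}^I(t) = \mathfrak{K}^I(t)\,\mathfrak{M}^I(t)$, so one only needs $\mathfrak{M}^I(t)$ to be an invertible superoperator. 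Solving \eqref{eq:propDef} for $\lambda \to 0$ gives $\Phi_I(t,t_0) \to \mathcal{I}$, hence $\mathfrak{M}^I(t) = \mathcal{I} + O(\lambda)$, which is invertible for $\lambda$ small enough — precisely the regime in which the asymptotic expansion is asserted.

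First I would write the Dyson series for the propagator solving \eqref{eq:propDef},
\begin{equation*}
\Phi_I(t,t_0) = \sum_{k=0}^{\infty} \lambda^k \int_{t_0}^t dt_k \ldots \int_{t_0}^{t_2} dt_1\, \mathcal{L}_I(t_k)\ldots\mathcal{L}_I(t_1),
\end{equation*}
which converges uniformly on $[t_0,t]$ since $\mathcal{L}_I$ is continuous and the dimension is finite. Applying the fixed linear map $\mathfrak{P}$ term by term, and using $\mathfrak{P}(\mathcal{I}) = \mathcal{I}$ for the $k=0$ term, yields $\mathfrak{M}^I(t) = \sum_{k=0}^{\infty}\lambda^k \mathfrak{M}_k^I(t)$ with $\mathfrak{M}_0^I(t) = \mathcal{I}$ and the $\mathfrak{M}_k^I(t)$ exactly as in \eqref{eq:momentsDef}. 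Differentiating term by term gives $\dot{\mathfrak{M}}^I(t) = \sum_{k=1}^{\infty}\lambda^k \dot{\mathfrak{M}}_k^I(t)$, the $k=0$ term dropping out because $\mathfrak{M}_0^I$ is constant.

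Next I would invert the near-identity series: writing $\mathfrak{M}^I(t) = \mathcal{I} + \mathfrak{N}(t)$ with $\mathfrak{N}(t) = \sum_{k\geqslant 1}\lambda^k \mathfrak{M}_k^I(t) = O(\lambda)$, the Neumann series gives $(\mathfrak{M}^I(t))^{-1} = \sum_{q=0}^{\infty}(-1)^q \mathfrak{N}(t)^q$. Substituting into $\mathfrak{K}^I = \dot{\mathfrak{M}}^I (\mathfrak{M}^I)^{-1}$ and forming the Cauchy product, the coefficient of $\lambda^n$ collects one factor $\dot{\mathfrak{M}}_{k_0}^I$ from $\dot{\mathfrak{M}}^I$ together with $q$ factors $\mathfrak{M}_{k_1}^I \ldots \mathfrak{M}_{k_q}^I$ from $\mathfrak{N}^q$, subject to $k_0 + k_1 + \cdots + k_q = n$ with each $k_j \geqslant 1$. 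These constraints are exactly the compositions of $n$ into $q+1$ positive parts, and since $q+1 \leqslant n$ the outer sum over $q$ runs from $0$ to $n-1$; this reproduces \eqref{eq:coeffKn} with the correct sign $(-1)^q$.

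The main obstacle is the combinatorial bookkeeping together with the justification that the expansion is genuinely asymptotic rather than merely formal. For the former, one must verify that the Cauchy product registers each composition once and only once, which is a routine but careful reindexing. For the latter, it suffices to note that $\mathfrak{M}^I(t)$ is entire in $\lambda$ at fixed $t$ and equals $\mathcal{I}$ at $\lambda=0$, so by analyticity of matrix inversion near a nonsingular point $(\mathfrak{M}^I(t))^{-1}$, and hence $\mathfrak{K}^I(t)$, is analytic in $\lambda$ in a neighbourhood of $0$. Its Taylor coefficients are then the $\mathfrak{K}_n^I(t)$, so truncation at order $N$ leaves a remainder $O(\lambda^{N+1})$ and all the series manipulations above are legitimate.
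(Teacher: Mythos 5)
Your proof is correct: reading off $\mathfrak{K}^I(t) = \dot{\mathfrak{M}}^I(t)\,(\mathfrak{M}^I(t))^{-1}$ from \eqref{eq:superoperatorMasterEquation}, expanding the Dyson series for $\Phi_I(t,t_0)$, applying $\mathfrak{P}$ term by term, and inverting the near-identity series by a Neumann series reproduces exactly the composition sum \eqref{eq:coeffKn}, and your analyticity remark (finite dimension, entire dependence on $\lambda$, invertibility at $\lambda=0$) legitimately justifies the expansion rather than leaving it formal. Note that the present paper does not actually prove Theorem \ref{th:cumulantExpansion} but recalls it from \cite{Teretenkov2024} (Proposition 1 there); your argument is essentially the standard time-convolutionless cumulant derivation used in that reference, so it matches the intended proof.
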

It is analog of the perturbation expansion provided by the Kubo-van Kampen cumulants and similar approaches \cite{Kubo1963, VanKampen1974, VanKampen1974a, Chaturvedi1979, Shibata1980, Nestmann2019}.

In this work we consider the special case of Equation \eqref{eq:propDef} which arises in interaction picture for the equation with the time-independent generator
\begin{equation}\label{eq:propShroedPicture}
	\frac{d}{dt}\Phi(t, t_0) = ( \mathcal{L}_0 + \lambda \mathcal{L}_I ) \Phi(t, t_0), \qquad  \Phi(t_0, t_0)= \mathcal{I}.
\end{equation}
And we assume that the free dynamics commutes with the hyperpojector $ \mathfrak{P}$:
\begin{equation}\label{eq:commAssump}
	\mathfrak{P}(e^{\mathcal{L}_0 t}\Phi) = e^{\mathcal{L}_0 t} \mathfrak{P}(\Phi), \quad  	\mathfrak{P}(\Phi e^{\mathcal{L}_0 t}) =\mathfrak{P}(\Phi)  e^{\mathcal{L}_0 t} 
\end{equation}
for arbitrary superoperator $\Phi $. 

If one defines $ \Phi_I(t, t_0) = e^{- \mathcal{L}_0 (t - t_0)} \Phi(t, t_0) $, then we obtain \eqref{eq:propDef} with $ \mathcal{L}(t) = e^{- \mathcal{L}_0 (t - t_0)} \mathcal{L}_I  e^{\mathcal{L}_0 (t - t_0)} $. From Equation \eqref{eq:superoperatorMasterEquation} under assumption \eqref{eq:commAssump} we have
\begin{equation}\label{eq:projectedpropShroedPicture}
	\frac{d}{dt}\mathfrak{P}(\Phi(t, t_0)) = \mathfrak{K}(t)  \mathfrak{P}(\Phi(t, t_0)),
\end{equation}
where
\begin{equation*}
	\mathfrak{K}(t)  = \mathcal{L}_0 +  e^{\mathcal{L}_0 (t - t_0)} \mathfrak{K}^I(t)  e^{- \mathcal{L}_0 (t - t_0)} .
\end{equation*}

From Theorem \ref{th:cumulantExpansion} we obtain the following asymptotic expansion for $\mathfrak{K}(t)$. 
\begin{corollary}
	\label{cor:cumulantExpansion}
	Let $\Phi(t, t_0)$ be defined as a solution of Cauchy problem \eqref{eq:propShroedPicture} and $\mathfrak{P}$ is a hyperpojector satisfying conditions \eqref{eq:identityPreservation} and \eqref{eq:commAssump}, then $\mathfrak{P}(\Phi(t, t_0))$ satisfies~\eqref{eq:projectedpropShroedPicture} for fixed $t$ and $t_0$ and sufficiently small $\lambda$ and the following asymptotic expansion holds
	\begin{equation*}
		\mathfrak{K}(t) = \sum_{n=0}^{\infty} \lambda^k \mathfrak{K}_k(t), \qquad \lambda \rightarrow 0,
	\end{equation*}
	where
	\begin{equation*}
		\mathfrak{K}_0(t) = \mathcal{L}_0, \qquad 	\mathfrak{K}_n(t) \equiv  \sum_{q=0}^{k-1} (-1)^q \sum_{\sum_{j=0}^q k_j =k, k_j \geqslant 1}  \grave{\mathfrak{M}}_{k_0}(t) \mathfrak{M}_{k_1}(t)  \ldots \mathfrak{M}_{k_{q}}(t),
	\end{equation*}
	with
	\begin{equation}\label{eq:Mk}
		\mathfrak{M}_{k}(t) \equiv  \int_{t_0}^t d t_k  \ldots  \int_{t_0}^{t_2} d t_1  \mathfrak{P}  ( e^{ \mathcal{L}_0(t- t_{k})}  \mathcal{L}_I e^{ \mathcal{L}_0(t_k - t_{k-1})} \ldots  e^{ \mathcal{L}_0(t_2 - t_{1})}\mathcal{L}_I e^{\mathcal{L}_0(t_1-t)})
	\end{equation}
	and
	\begin{align}
		\grave{\mathfrak{M}}_{k}(t) &\equiv  \int_{t_0}^t d t_{k-1}  \ldots  \int_{t_0}^{t_2} d t_1  \mathfrak{P}  (   \mathcal{L}_I e^{ \mathcal{L}_0(t - t_{k-1})} \ldots  e^{ \mathcal{L}_0(t_2 - t_{1})}\mathcal{L}_I e^{\mathcal{L}_0(t_1-t)}) \nonumber\\
		&=\frac{d}{dt}\mathfrak{M}_k(t) - [\mathcal{L}_0,\mathfrak{M}_k(t)]. \label{eq:Mgrk}
	\end{align}
\end{corollary}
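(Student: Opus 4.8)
The plan is to derive Corollary~\ref{cor:cumulantExpansion} directly from Theorem~\ref{th:cumulantExpansion} by tracking how the change of picture affects each ingredient of the cumulant expansion. First I would establish the relation between the interaction-picture generator $\mathcal{L}(t)$ and $\mathcal{L}_I$. Writing $\Phi_I(t,t_0) = e^{-\mathcal{L}_0(t-t_0)}\Phi(t,t_0)$ and differentiating, one checks that $\Phi_I$ satisfies~\eqref{eq:propDef} with $\mathcal{L}_I(t) = e^{-\mathcal{L}_0(t-t_0)}\mathcal{L}_I\, e^{\mathcal{L}_0(t-t_0)}$. Substituting this into the moment definition~\eqref{eq:momentsDef}, the intermediate factors $e^{\mathcal{L}_0(t_{j+1}-t_0)}e^{-\mathcal{L}_0(t_j-t_0)} = e^{\mathcal{L}_0(t_{j+1}-t_j)}$ telescope, leaving a product with $\mathcal{L}_I$ interlaced by free-evolution factors and an overall outer dressing $e^{-\mathcal{L}_0(t_k-t_0)}(\cdots)e^{\mathcal{L}_0(t_1-t_0)}$.

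The second step is to pass from $\mathfrak{K}^I(t)$ to $\mathfrak{K}(t)$ through the stated conjugation $\mathfrak{K}(t) = \mathcal{L}_0 + e^{\mathcal{L}_0(t-t_0)}\mathfrak{K}^I(t)e^{-\mathcal{L}_0(t-t_0)}$. Here assumption~\eqref{eq:commAssump} is essential: it lets me pull the free-evolution factors through the hyperprojector $\mathfrak{P}$, so that conjugating the interaction-picture moment $\mathfrak{M}_k^I(t)$ by $e^{\mathcal{L}_0(t-t_0)}$ on the left and $e^{-\mathcal{L}_0(t-t_0)}$ on the right absorbs the outer dressing factors and reproduces the Schr\"odinger-picture moment~\eqref{eq:Mk}, in which the evolution factors are now referenced to the final time $t$ rather than $t_0$. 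Because the conjugation is an algebra homomorphism, it distributes over the products of moments appearing in~\eqref{eq:coeffKn}, so the combinatorial structure (the sum over compositions with the sign $(-1)^q$) is preserved verbatim.

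The one genuinely delicate point, which I expect to be the main obstacle, is the treatment of the derivative factor $\dot{\mathfrak{M}}_{k_0}^I(t)$. Under conjugation the time derivative does not simply transform covariantly, because $e^{\mathcal{L}_0(t-t_0)}$ itself depends on $t$. Differentiating the product $e^{\mathcal{L}_0(t-t_0)}\mathfrak{M}_k^I(t)e^{-\mathcal{L}_0(t-t_0)}$ by the Leibniz rule produces the transformed derivative plus commutator terms $\mathcal{L}_0 e^{\cdots}\mathfrak{M}_k^I e^{\cdots} - e^{\cdots}\mathfrak{M}_k^I e^{\cdots}\mathcal{L}_0$. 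This is exactly the origin of the correction $-[\mathcal{L}_0,\mathfrak{M}_k(t)]$ in~\eqref{eq:Mgrk}: the conjugated derivative equals $\grave{\mathfrak{M}}_k(t)$ only after subtracting the commutator, so I must verify that $\grave{\mathfrak{M}}_k(t) = \tfrac{d}{dt}\mathfrak{M}_k(t) - [\mathcal{L}_0,\mathfrak{M}_k(t)]$ and that this is the object that enters the conjugated expansion in place of $\dot{\mathfrak{M}}_{k_0}^I$.

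Finally I would check consistency at the lowest orders: the $n=0$ term is absent from the interaction-picture expansion but is reinstated as $\mathfrak{K}_0(t)=\mathcal{L}_0$ by the additive $\mathcal{L}_0$ in the conjugation formula, and the integral-differential form~\eqref{eq:Mk}--\eqref{eq:Mgrk} follows by carrying out the time differentiation in~\eqref{eq:momentsDef} and reducing the number of integrations by one. Collecting these pieces reproduces the stated expansion for $\mathfrak{K}(t)$ term by term, completing the proof.
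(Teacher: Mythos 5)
Your proposal is correct and takes essentially the same route as the paper: the corollary is obtained there precisely by passing to the interaction picture via $\Phi_I(t,t_0)=e^{-\mathcal{L}_0(t-t_0)}\Phi(t,t_0)$, applying Theorem~\ref{th:cumulantExpansion}, and conjugating back through $\mathfrak{K}(t)=\mathcal{L}_0+e^{\mathcal{L}_0(t-t_0)}\mathfrak{K}^I(t)e^{-\mathcal{L}_0(t-t_0)}$, with assumption~\eqref{eq:commAssump} used to pull the free-evolution factors through $\mathfrak{P}$ so that the conjugated interaction-picture moments become the moments referenced to the final time $t$. Your handling of the delicate point --- that the Leibniz rule applied to the conjugated moment yields the commutator correction, so that $\grave{\mathfrak{M}}_k(t)=\frac{d}{dt}\mathfrak{M}_k(t)-[\mathcal{L}_0,\mathfrak{M}_k(t)]$ is exactly the image of $\dot{\mathfrak{M}}_{k}^I(t)$ under conjugation --- is precisely the computation underlying \eqref{eq:Mgrk}.
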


In particular, up to the second order we have
\begin{align}
	\mathfrak{K}(t)  =& \mathcal{L}_0 + \lambda \mathfrak{P}(  \mathcal{L}_I  ) \nonumber\\
	&+ \lambda^2  \int_{t_0}^t d t_{1} \left( \mathfrak{P}  (   \mathcal{L}_I   e^{ \mathcal{L}_0(t - t_1)}\mathcal{L}_I e^{-\mathcal{L}_0(t-t_1)}) - \mathfrak{P}  (   \mathcal{L}_I ) \mathfrak{P}  (  e^{ \mathcal{L}_0(t - t_1)}\mathcal{L}_I e^{-\mathcal{L}_0(t-t_1)})\right) +O(\lambda^3). \label{eq:secOrd}
\end{align}

\section{Twirling hyperpojector}
\label{eq:twirling}

In \cite{Teretenkov2022Effective, Teretenkov2024} the hyperpojector of averaging with respect to free unitary dynamics was considered:
\begin{equation}\label{eq:freeAverHyper}
	\mathfrak{P}_{\rm free}(\Phi) = \lim\limits_{T \rightarrow \infty} \frac{1}{T}\int_0^T e^{- i H_0 t} \Phi(e^{i H_0 t} \; \cdot \; e^{- i H_0 t} ) e^{i H_0 t}  dt,
\end{equation}
where $H_0$ is a free Hamiltonian. Superoperator master equations with such a hyperpojector are closely related to effective Hamiltonian theory \cite{Trubilko2019, Trubilko2020, Basharov2021, Teretenkov2022Effective, Teretenkov2022, Aleksashin2023a, Aleksashin2023b}. Here and below we use dot in a superoperator as a place, where the operators should be substituted, when the superoperator is acting on them.

A natural generalization of \eqref{eq:freeAverHyper} is the twirling hyperpojector
\begin{equation*}
	\mathfrak{P}_G(\Phi) = \int_G U^{\dagger} \Phi(U \; \cdot \; U^{\dagger})U dU
\end{equation*}
with respect to a subgroup $G$ of the unitary group $U(n)$ and $dU$ is the Haar measure normalized in such a way that
\begin{equation*}
	\int_G dU = 1.
\end{equation*}
Condition \eqref{eq:identityPreservation} is satisfied automatically for $\mathfrak{P}_G$. 

A physical interpretation of such a hyperprojector can be understood similarly to \cite[Section 4.1]{Teretenkov2024} in terms of two-time measurements. Under Markovian assumption two-time corrleation functions are defined by the quantum regression formula \cite[Section 3.2.4]{Breuer2002}
\begin{equation}\label{eq:twoTimeCorrFun}
	\langle  Y(t) X(t_0) \rangle = \operatorname{Tr}  Y \Phi(t, t_0) (X \rho(t_0)), \qquad t \geqslant t_0.
\end{equation}
Now let us assume the measurement  basis is actually random but fixed so both measurements are performed in the same random basis. And the choice of such a basis is uniquely defined by transformation $U \in G$ from a fixed initial basis. Then correlation function of  transformed operators $X_U = U X U^{\dagger} $ and $Y_U = U Y U^{\dagger} $ in a transformed state $\rho_U(t_0) = U\rho(t_0) U^{\dagger}$ is  measured
\begin{equation*}
	\langle  	Y_U(t) X_U(t_0) \rangle_U =  \operatorname{Tr}  (U Y U^{\dagger} \Phi(t, t_0) ( U X U^{\dagger}  U\rho(t_0) U^{\dagger})) =   \operatorname{Tr}  ( Y U^{\dagger} \Phi(t, t_0) ( U X   \rho(t_0) U^{\dagger}) U ) .
\end{equation*}
If only averaged values with respect to random choice of $U$ are experimentally accessible, then 
\begin{equation*}
	\int_G\langle  	Y_U(t) X_U(t_0) \rangle_U  dU = \int_G \operatorname{Tr}  ( Y U^{\dagger} \Phi(t, t_0) ( U X   \rho(t_0) U^{\dagger}) U ) dU = \operatorname{Tr}  (Y \mathfrak{P}(\Phi(t, t_0)) ( X \rho(t_0))
\end{equation*}
is actually measured. In concrete physical setups similarly to \cite[Section 4.2]{Teretenkov2024} such an averaging can occur effectively due to homogenization on certain scales of time and systems degrees of freedom  without postulating it. Nevertheless, remark that the regression formulae after the averaging for higher order correlation functions can be non-Markovian in this case, similarly to \cite{Teretenkov2023}.

$\mathfrak{P}_{\rm free}$ can be considered as a special $	\mathfrak{P}_G$, where $G$ is a (possibly reducible) representation of $U(1)$. In a certain sense an opposite case is the twirling with respect to  $G = U(n)$. 
\begin{equation}\label{eq:fullAverProj}
	\mathfrak{P}(\Phi) = \int_{U(n)} U^{\dagger} \Phi(U \cdot U^{\dagger})U dU.
\end{equation}
In this work we consider only this hyperprojector here and below, so we omit index $G$. It can be written in the following form  \cite[Corollary 3.11]{Zhang2014}.
\begin{equation}\label{eq:fullAverProjExplicit}
	\mathfrak{P}(\Phi) = \frac{n \operatorname{Tr} (\Phi(I)) - \operatorname{tr}(\Phi)}{n (n^2-1)} (\operatorname{Tr}  X) I +  \frac{n \operatorname{tr}(\Phi) -  \operatorname{Tr} (\Phi(I))}{n (n^2-1)} X,
\end{equation}
where
\begin{equation}\label{eq:traceOfSuperoperator}
	\operatorname{tr} \Phi \equiv \sum_{k, m} \langle k| \Phi(|k\rangle \langle m|) |m\rangle .
\end{equation}
Here and below we assume that the Hilbert space dimension $n \geqslant 2$.

If $\Phi$ is a channel \cite[Equation (3.43)]{Zhang2014}, then $\mathfrak{P}(\Phi)$  is a depolarizing channel
\begin{equation}\label{eq:PofChannel}
	\mathfrak{P}(\Phi) =  \Lambda_p,
\end{equation}
where 
\begin{equation}\label{eq:LambdaDef}
	\Lambda_p X \equiv p X + (1-p) \frac{I}{n} \operatorname{Tr}  X 
\end{equation}
for any matrix $X \in \mathbb{C}^{n \times n}$, and
\begin{equation}\label{eq:pForPhi}
	p = \frac{\operatorname{tr}(\Phi) - 1}{n^2 - 1}.
\end{equation}
Let us remark that formulae \eqref{eq:PofChannel} and \eqref{eq:LambdaDef} are valid for any trace-preserving superoperator, but $\Lambda_p $  is not necessary a channel, if $\Phi$ is not a channel. In terms of $p$  $\Lambda_p $ is a channel if and only if $p \in \left[-\frac{1}{n^2 -1}, 1\right]$ (see, e.g. \cite[Equation (2)]{King2003}). In the case, when $\Phi$ is a channel, $\operatorname{tr}(\Phi)$ can be also written as \cite[Example 46]{Mele2023} 
\begin{equation*}
	\operatorname{tr}(\Phi) = n^2 F_{e}(\Phi),
\end{equation*}
where $ F_{e}$ is the entanglement fidelity of channel $\Phi$
\begin{equation*}
	F_{e}(\Phi) \equiv \langle \Omega | \Phi \otimes I(| \Omega \rangle  \langle \Omega |)| \Omega \rangle,
\end{equation*}
where $| \Omega \rangle$ is the maximally entangled state $ | \Omega \rangle = \frac{1}{\sqrt{n}} \sum_{k=1}^n |k \rangle \otimes |k \rangle$.
\begin{lemma}
	Superoperator \eqref{eq:LambdaDef} and hyperprojector \eqref{eq:fullAverProj} satisfy the following properties
	\begin{enumerate}
		\item (Stationary superoperator)
		\begin{equation}\label{eq:sattionarySuperoperator}
			\mathfrak{P}(\Lambda_p) =  \Lambda_p
		\end{equation}
		\item (Abelian semigroup property) For $p, q \in \mathbb{C}$
		\begin{equation}\label{eq:LambdaComm}
			\Lambda_p \Lambda_q = \Lambda_q \Lambda_p = \Lambda_{pq}.
		\end{equation}
		\item (Commutativity of left or right action of $\Lambda_p$) Let $\Phi$ be a trace-preserving superoperator, then
		\begin{equation}\label{eq:leftRightActionComm}
			\mathfrak{P}(\Lambda_p\Phi) = \Lambda_p\mathfrak{P}(\Phi) = \mathfrak{P}(\Phi)  \Lambda_p = \mathfrak{P}(\Phi \Lambda_p)  .
		\end{equation}
		\item  (Absorbing property) Let $\Phi$ and $\Psi$ be  trace-preserving superoperators, then
		\begin{equation}\label{eq:absorbing}
			\mathfrak{P}(\mathfrak{P}(\Phi)\Psi) = 	\mathfrak{P}(\Phi\mathfrak{P}(\Psi))= \mathfrak{P}(\Phi) \mathfrak{P}(\Psi) 
		\end{equation}
		\item (Commutativity of projected superoperators) Let $\Phi$ and $\Psi$ be  trace-preserving superoperators, then
		\begin{equation}\label{eq:commP}
			\mathfrak{P}(\Phi) \mathfrak{P}(\Psi)= \mathfrak{P}(\Psi) \mathfrak{P}(\Phi) 
		\end{equation}
	\end{enumerate}
\end{lemma}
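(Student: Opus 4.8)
The plan is to make the Abelian semigroup law \eqref{eq:LambdaComm} the algebraic backbone and to derive the other four statements from it together with a single structural observation: the depolarizing superoperator $\Lambda_p$ is \emph{covariant} under unitary conjugation. I would begin with \eqref{eq:LambdaComm} by direct substitution into \eqref{eq:LambdaDef}. Using $\Lambda_p(I)=I$ and the trace preservation of $\Lambda_q$ (so $\Tr(\Lambda_q X)=\Tr X$), one finds $\Lambda_p\Lambda_q X = pq\,X + (1-pq)\tfrac{I}{n}\Tr X = \Lambda_{pq}X$; symmetry in $p,q$ gives $\Lambda_p\Lambda_q=\Lambda_q\Lambda_p$. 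This is a short calculation and fixes the multiplicative law $p\mapsto pq$ that governs everything below.

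For the stationarity \eqref{eq:sattionarySuperoperator} I would write $\Lambda_p = p\,\mathcal{I} + (1-p)\,\mathcal{D}$, where $\mathcal{I}$ is the identity superoperator and $\mathcal{D}X = \tfrac{I}{n}\Tr X$. By \eqref{eq:identityPreservation}, $\mathfrak{P}(\mathcal{I})=\mathcal{I}$, while $\mathfrak{P}(\mathcal{D})=\mathcal{D}$ follows at once from the definition \eqref{eq:fullAverProj}, since $\Tr(UXU^\dagger)=\Tr X$ and $\int_{U(n)}U^\dagger U\,dU = I$. Linearity of $\mathfrak{P}$ then yields $\mathfrak{P}(\Lambda_p)=\Lambda_p$. (Alternatively one inserts $\Lambda_p$, which is trace preserving, into \eqref{eq:pForPhi}: evaluating the superoperator trace \eqref{eq:traceOfSuperoperator} gives $\tr(\Lambda_p)=p(n^2-1)+1$, so the recovered parameter is exactly $p$.)

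The conceptual heart is \eqref{eq:leftRightActionComm}, and the key step to isolate is the covariance identity
\begin{equation*}
	\Lambda_p(VXV^\dagger) = pVXV^\dagger + (1-p)\tfrac{I}{n}\Tr(VXV^\dagger) = V\,\Lambda_p(X)\,V^\dagger,
\end{equation*}
valid for every unitary $V$ because $\Tr(VXV^\dagger)=\Tr X$ and $VIV^\dagger=I$. Applying this with $V=U^\dagger$ inside the twirl \eqref{eq:fullAverProj} and pulling the linear map $\Lambda_p$ out of the Haar integral gives the two outer equalities,
\begin{equation*}
	\mathfrak{P}(\Lambda_p\Phi) = \int_{U(n)} \Lambda_p\big(U^\dagger\Phi(U\,\cdot\,U^\dagger)U\big)\,dU = \Lambda_p\,\mathfrak{P}(\Phi), \qquad \mathfrak{P}(\Phi\Lambda_p) = \mathfrak{P}(\Phi)\,\Lambda_p,
\end{equation*}
the second obtained the same way, now using covariance with $V=U$ on the argument $UXU^\dagger$ before $\Phi$ acts. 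The middle equality $\Lambda_p\mathfrak{P}(\Phi)=\mathfrak{P}(\Phi)\Lambda_p$ is then immediate: since $\Phi$ is trace preserving, $\mathfrak{P}(\Phi)=\Lambda_{p'}$ for some $p'$ by \eqref{eq:PofChannel}, and \eqref{eq:LambdaComm} makes the two depolarizing maps commute.

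Finally, \eqref{eq:absorbing} and \eqref{eq:commP} are corollaries. For the absorbing property, trace preservation of $\Phi$ gives $\mathfrak{P}(\Phi)=\Lambda_{p_1}$, so $\mathfrak{P}(\mathfrak{P}(\Phi)\Psi)=\mathfrak{P}(\Lambda_{p_1}\Psi)=\Lambda_{p_1}\mathfrak{P}(\Psi)=\mathfrak{P}(\Phi)\mathfrak{P}(\Psi)$ by \eqref{eq:leftRightActionComm}, and symmetrically $\mathfrak{P}(\Phi\mathfrak{P}(\Psi))=\mathfrak{P}(\Phi)\mathfrak{P}(\Psi)$. For \eqref{eq:commP}, writing $\mathfrak{P}(\Phi)=\Lambda_{p_1}$ and $\mathfrak{P}(\Psi)=\Lambda_{p_2}$ reduces the claim directly to \eqref{eq:LambdaComm}. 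I expect the only genuine obstacle to be the correct handling of the Haar integral in \eqref{eq:leftRightActionComm}: one must check that $\Lambda_p$ really commutes with the conjugation $U^\dagger(\,\cdot\,)U$ under the integral sign, which is precisely what the covariance identity supplies; every remaining step is linear bookkeeping resting on \eqref{eq:LambdaComm}.
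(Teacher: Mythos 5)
Your proposal is correct, and all five statements are established, but the route differs from the paper's in the two places where the hyperprojector actually has to be evaluated. For stationarity \eqref{eq:sattionarySuperoperator} the paper computes the superoperator trace $\operatorname{tr}(\Lambda_p)=pn^2+(1-p)$ and feeds it through \eqref{eq:PofChannel}--\eqref{eq:pForPhi}, i.e.\ exactly the ``alternative'' you mention parenthetically; your primary argument via the decomposition $\Lambda_p=p\,\mathcal{I}+(1-p)\,\mathcal{D}$ and unitary invariance of $\mathcal{D}$ is an equally valid, slightly cleaner substitute. The real divergence is in \eqref{eq:leftRightActionComm}: the paper proves the outer equalities by brute force, expanding $\mathfrak{P}(\Lambda_p\Phi)$ and $\Lambda_p\mathfrak{P}(\Phi)$, computing $\operatorname{Tr}\bigl((I\operatorname{Tr}(\,\cdot\,)\Phi)(I)\bigr)$ and $\operatorname{tr}(I\operatorname{Tr}(\,\cdot\,)\Phi)$, and substituting into the explicit formula \eqref{eq:fullAverProjExplicit}, whereas you isolate the covariance identity $\Lambda_p(VXV^{\dagger})=V\Lambda_p(X)V^{\dagger}$ and pull $\Lambda_p$ through the Haar integral (unproblematic in finite dimensions, since $\Lambda_p$ is a bounded linear map and the integrand is continuous on the compact group $U(n)$). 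Your argument buys two things the paper's does not: the outer equalities $\mathfrak{P}(\Lambda_p\Phi)=\Lambda_p\mathfrak{P}(\Phi)$ and $\mathfrak{P}(\Phi\Lambda_p)=\mathfrak{P}(\Phi)\Lambda_p$ hold for \emph{arbitrary} superoperators $\Phi$, trace preservation being needed only for the middle equality (to write $\mathfrak{P}(\Phi)=\Lambda_{p'}$ and invoke \eqref{eq:LambdaComm}); and the mechanism visibly generalizes to twirling over any subgroup $G$, where any $G$-covariant map commutes with the $G$-twirl --- relevant to the generalizations suggested in the paper's conclusions. What the paper's computation buys in exchange is independence from any structural observation: it is pure bookkeeping on top of \eqref{eq:fullAverProjExplicit}. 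Your treatment of the absorbing property \eqref{eq:absorbing} and of commutativity \eqref{eq:commP} coincides with the paper's word for word (reduce to $\Lambda_{p}$, $\Lambda_{q}$ via \eqref{eq:PofChannel}, then apply \eqref{eq:leftRightActionComm} and \eqref{eq:LambdaComm}), as does your direct verification of \eqref{eq:LambdaComm}.
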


\begin{proof}
	\begin{enumerate}
		\item 	By definitions \eqref{eq:traceOfSuperoperator} and \eqref{eq:LambdaDef} we have
		\begin{align*}
			\operatorname{tr}(\Lambda_p) = \sum_{k, m} \langle k| \Lambda_p(|k\rangle \langle m|) |m\rangle  = p \sum_{k, m} \langle k|k\rangle \langle m| m\rangle  + (1-p)\sum_{k, m} \langle k| \frac{I}{n} \operatorname{Tr}  (|k\rangle \langle m|)|m\rangle\\
			=p \sum_{k, m} 1 + (1-p) \sum_{k} \langle k| \frac{I}{n} |k\rangle =  p n^2 + (1-p)
		\end{align*}
		So 
		\begin{equation*}
			\frac{\operatorname{tr}(\Lambda_p) - 1}{n^2 - 1} = p
		\end{equation*}
		and by formulae \eqref{eq:PofChannel} and \eqref{eq:LambdaDef} we obtain \eqref{eq:sattionarySuperoperator}.
		\item By definition \eqref{eq:LambdaDef} we have
		\begin{align*}
			\Lambda_p \Lambda_q &= \left( p \mathcal{I} + (1-p) \frac{I}{n} \operatorname{Tr}  (\; \cdot \;)\right)\left( q \mathcal{I} + (1-q) \frac{I}{n} \operatorname{Tr}  (\; \cdot \;)\right)\\
			&= p q \mathcal{I} +  (1-p) q \frac{I}{n} \operatorname{Tr}  (\; \cdot \;) + p (1-q)  \frac{I}{n} \operatorname{Tr}  (\; \cdot \;) + (1-p)  (1-q) \frac{I}{n} \frac{\operatorname{Tr}  I}{n}  \operatorname{Tr}  (\; \cdot \;)\\
			&=  p q \mathcal{I} +  ((1-p) q + + p (1-q) + (1-p)  (1-q) )\frac{I}{n} \operatorname{Tr}  (\; \cdot \;) =  p q \mathcal{I} +  (1 - pq )\frac{I}{n} \operatorname{Tr}  (\; \cdot \;).
		\end{align*}
		\item By definition \eqref{eq:LambdaDef} we have
		\begin{equation}\label{eq:PLambdapPhi}
			\mathfrak{P}(\Lambda_p\Phi) = p \mathfrak{P}(\Phi) +  (1-p) \mathfrak{P}\left( \frac{I}{n} \operatorname{Tr} (\; \cdot \;) \Phi\right)
		\end{equation}
		and
		\begin{equation}\label{eq:LambdapPPhi}
			\Lambda_p\mathfrak{P}(\Phi) = p \mathfrak{P}(\Phi) +  (1-p) \frac{I}{n} \operatorname{Tr} (\; \cdot \;) \mathfrak{P}\left( \Phi\right)
		\end{equation}
		Let as calculate
		\begin{equation}\label{eq:trITrPgiI}
			\operatorname{Tr} ((I \operatorname{Tr} (\; \cdot \;)\Phi)(I)) = \operatorname{Tr} ( I \operatorname{Tr}  \Phi(I)) = n \operatorname{Tr}  \Phi(I)
		\end{equation}
		and
		\begin{align}
			\operatorname{tr}(I \operatorname{Tr} (\; \cdot \;)\Phi) =  \sum_{k, m} \langle k| I \operatorname{Tr} (\Phi(|k\rangle \langle m|)) |m\rangle  &= \sum_{k, m}  \operatorname{Tr} (\Phi(|k\rangle \langle m|) )\langle k| m\rangle \nonumber\\
			&= \sum_k  \operatorname{Tr} (\Phi(|k\rangle \langle k|) ) = \operatorname{Tr}  \Phi(I) \label{eq:trITrPgi}
		\end{align}
		Substituting \eqref{eq:trITrPgiI} and \eqref{eq:trITrPgi} into \eqref{eq:fullAverProjExplicit} we have
		\begin{align*}
			\mathfrak{P}(I \operatorname{Tr} (\; \cdot \;)\Phi) X =& \frac{n  \operatorname{Tr} ((I \operatorname{Tr} (\; \cdot \;)\Phi)(I))  - \operatorname{tr}(I \operatorname{Tr} (\; \cdot \;)\Phi)}{n (n^2-1)} (\operatorname{Tr}  X) I \\
			&+  \frac{n \operatorname{tr}(I \operatorname{Tr} (\; \cdot \;)\Phi) -   \operatorname{Tr} ((I \operatorname{Tr} (\; \cdot \;)\Phi)(I)) }{n (n^2-1)} X\\
			=& \frac{n^2  \operatorname{Tr}  \Phi(I) -  \operatorname{Tr}  \Phi(I)}{n (n^2-1)}  (\operatorname{Tr}  X) I + \frac{n \operatorname{Tr}  \Phi(I) - n \operatorname{Tr}  \Phi(I)}{n (n^2-1)} X\\
			=& \frac{I \operatorname{Tr}  \Phi(I)}{n }  (\operatorname{Tr}  X)  = \frac{I \operatorname{Tr}  \Phi(I)}{n } \operatorname{Tr} (\; \cdot \;)  X
		\end{align*}
		and
		\begin{align*}
			I\operatorname{Tr} (\; \cdot \;) \mathfrak{P}(\Phi) X &=  \frac{n \operatorname{Tr} (\Phi(I)) - \operatorname{tr}(\Phi)}{n (n^2-1)} (\operatorname{Tr}  X)  I\operatorname{Tr} (\; \cdot \;) I +  \frac{n \operatorname{tr}(\Phi) -  \operatorname{Tr} (\Phi(I))}{n (n^2-1)}  I\operatorname{Tr}  X \\
			&=\frac{(n^2-1) \operatorname{Tr} (\Phi(I)) - n \operatorname{tr}(\Phi) + n \operatorname{tr}(\Phi)}{n (n^2-1)} (\operatorname{Tr}  X) I = \frac{I \operatorname{Tr}  \Phi(I)}{n }  (\operatorname{Tr}  X)  
		\end{align*}
		for any matrix $  X \in \mathbb{C}$. Substituting it to \eqref{eq:PLambdapPhi} and \eqref{eq:LambdapPPhi} we obtain
		\begin{equation*}
			\mathfrak{P}(\Lambda_p\Phi) = p \mathfrak{P}(\Phi) +  (1-p)   \frac{I \operatorname{Tr}  \Phi(I)}{n } \operatorname{Tr} (\; \cdot \;)   \mathfrak{P}(\Phi) = \Lambda_p	\mathfrak{P}(\Phi).
		\end{equation*}
		Similarly, we have
		\begin{equation*}
			\mathfrak{P}(\Phi\Lambda_p) = \mathfrak{P}(\Phi) \Lambda_p.
		\end{equation*}
		As $ \mathfrak{P}(\Phi) = \Lambda_{p'}$, where $p'$ is defined by formula \eqref{eq:leftRightActionComm} $	p' = (n^2 - 1)^{-1} (\operatorname{tr}(\Phi) - 1)$, then due to \eqref{eq:LambdaComm} we have
		\begin{equation*}
			\Lambda_p	\mathfrak{P}(\Phi) = \Lambda_p\Lambda_{p'} = \Lambda_{p'}\Lambda_p = 	\mathfrak{P}(\Phi) \Lambda_p.
		\end{equation*}
		\item Similarly, by Equation \eqref{eq:leftRightActionComm} we have
		\begin{equation}
			\mathfrak{P}(\mathfrak{P}(\Phi)\Psi) = \mathfrak{P}(\Lambda_{p}\Psi)  = \Lambda_{p} \mathfrak{P}(\Psi) = \mathfrak{P}(\Phi) \mathfrak{P}(\Psi) ,
		\end{equation}
		where $	p = (n^2 - 1)^{-1} (\operatorname{tr}(\Phi) - 1)$, and
		\begin{equation}
			\mathfrak{P}(\Phi \mathfrak{P}(\Psi)) = \mathfrak{P}(\Phi \Lambda_{q} )  =\mathfrak{P}(\Phi )  \Lambda_{q}= \mathfrak{P}(\Phi) \mathfrak{P}(\Psi) ,
		\end{equation}
		where $	q = (n^2 - 1)^{-1} (\operatorname{tr}(\Psi) - 1)$.
		\item By formula \eqref{eq:LambdaComm} we have
		\begin{equation*}
			\mathfrak{P}(\Phi) \mathfrak{P}(\Psi)= \Lambda_p \Lambda_q  = \Lambda_q \Lambda_p = \mathfrak{P}(\Psi) \mathfrak{P}(\Phi),
		\end{equation*}
		where $	p = (n^2 - 1)^{-1} (\operatorname{tr}(\Phi) - 1)$ and $	q = (n^2 - 1)^{-1} (\operatorname{tr}(\Psi) - 1)$.
	\end{enumerate}
\end{proof}

In \cite{Teretenkov2024} it was mentioned  that the difference between master equations in usual projection   approach and the one with hyperpojectors can be seen as change $\mathcal{P} \; \cdot \mathcal{P}$ to $\mathfrak{P}$. Here $ \mathcal{P}$ is a usual projector. This difference mostly consists in the fact that absorbing property \eqref{eq:absorbing} obviously holds for $\mathcal{P} \; \cdot \mathcal{P}$, but can be violated for arbitrary hyperpojectors. Thus, absorbing property is crucial for similarity to  usual projection  approach.  This property can be considered as a superoperator analog of associativity of the Choi-Effros product \cite{Choi1977, Yashin2022}.

For any generator $\mathcal{L}$ of trace-prserving semigroup, i.e. a superoperator $ \mathcal{L}$ such that $\operatorname{Tr}  \mathcal{L}(X) = 0$, Equation \eqref{eq:fullAverProjExplicit} is simplified to 
\begin{equation}\label{eq:PL}
	\mathfrak{P}(\mathcal{L})  = \frac{\operatorname{Tr} (\mathcal{L}) }{n^2-1} \left(\mathcal{I} - \frac{I}{n} \operatorname{Tr}  (\; \cdot \;)\right).
\end{equation}

\begin{lemma}
	\begin{enumerate}
		Let $\mathfrak{P}$ be defined by formula \eqref{eq:fullAverProj} and $	\Lambda_{p}$ be defined by \eqref{eq:LambdaDef},  $p\in \mathbb{C}$.
		\item Let $\mathcal{L}$ be a superoperator, such that $\operatorname{Tr}  \mathcal{L}(X) = 0$ for all $X \in \mathbb{C}^{n \times n}$, then
		\begin{equation}\label{eq:LambdaL}
			\Lambda_{p} \mathcal{L} = p  \mathcal{L}
		\end{equation}
		and
		\begin{equation}\label{eq:LLambda}
			\mathcal{L} \Lambda_{p}= p  \mathcal{L} + (1-p)\frac{\mathcal{L}(I)}{n} \operatorname{Tr}  (\; \cdot \;) .
		\end{equation}
		\item  Let $\mathcal{L}$ be a superoperator, such that $\operatorname{Tr}  \mathcal{L}(X) = 0$ for all $X \in \mathbb{C}^{n \times n}$, then
		\begin{equation}\label{eq:PLambdaL}
			\mathfrak{P}(\Lambda_{p} \mathcal{L} ) = \Lambda_{p}  \mathfrak{P}(\mathcal{L} ) =   \mathfrak{P}(\mathcal{L} ) \Lambda_{p} =   \mathfrak{P}(\mathcal{L} \Lambda_{p}) 
		\end{equation}
		\item Let $\mathcal{L}_j$, $j=1, \ldots k$, be superoperators, such that   $\operatorname{Tr}  \mathcal{L}_{j}(X) = 0$ for all $X \in \mathbb{C}^{n \times n}$, then
		\begin{equation}\label{eq:LambdaCommFromP}
			\mathfrak{P}(\Lambda_{p_1} \mathcal{L}_1 \ldots  \Lambda_{p_k} \mathcal{L}_k \Lambda_{p_{k+1}} ) =  \Lambda_{p_1 \ldots p_{k+1}} \mathfrak{P}( \mathcal{L}_1 \ldots   \mathcal{L}_k ).
		\end{equation}
	\end{enumerate}
\end{lemma}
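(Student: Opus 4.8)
The plan is to strip the depolarizing factors off one at a time, exploiting that every $\Lambda_{p_i}$ with $i \leqslant k$ sits immediately to the left of the trace-annihilating map $\mathcal{L}_i$. First I would observe that since $\operatorname{Tr}\mathcal{L}_i(Y)=0$ for all $Y$, property \eqref{eq:LambdaL} gives $\Lambda_{p_i}\mathcal{L}_i = p_i \mathcal{L}_i$ as composed superoperators. Grouping the left-hand side as $(\Lambda_{p_1}\mathcal{L}_1)(\Lambda_{p_2}\mathcal{L}_2)\cdots(\Lambda_{p_k}\mathcal{L}_k)\Lambda_{p_{k+1}}$ and applying this to each parenthesised pair, the leading and interior projectors collapse into scalars, leaving
\[
	\Lambda_{p_1}\mathcal{L}_1 \cdots \Lambda_{p_k}\mathcal{L}_k \Lambda_{p_{k+1}} = (p_1 \cdots p_k)\, \mathcal{M}\, \Lambda_{p_{k+1}}, \qquad \mathcal{M} \equiv \mathcal{L}_1 \cdots \mathcal{L}_k.
\]
By linearity of $\mathfrak{P}$ this reduces the claim to evaluating $\mathfrak{P}(\mathcal{M}\Lambda_{p_{k+1}})$.

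Next I would note that $\mathcal{M}$ is itself trace-annihilating: since the outermost factor $\mathcal{L}_1$ kills traces, $\operatorname{Tr}(\mathcal{M}(X)) = \operatorname{Tr}(\mathcal{L}_1(\mathcal{L}_2 \cdots \mathcal{L}_k(X))) = 0$ for every $X$. Hence property \eqref{eq:PLambdaL} applies to $\mathcal{M}$ and yields $\mathfrak{P}(\mathcal{M}\Lambda_{p_{k+1}}) = \Lambda_{p_{k+1}}\mathfrak{P}(\mathcal{M})$. Moreover, by \eqref{eq:PL} the image $\mathfrak{P}(\mathcal{M})$ is a scalar multiple of $\mathcal{I} - \frac{I}{n}\operatorname{Tr}(\,\cdot\,)$, which is again trace-annihilating; therefore \eqref{eq:LambdaL} gives $\Lambda_{p_{k+1}}\mathfrak{P}(\mathcal{M}) = p_{k+1}\mathfrak{P}(\mathcal{M})$. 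Collecting scalars, the left-hand side equals $(p_1 \cdots p_k)p_{k+1}\mathfrak{P}(\mathcal{M}) = (p_1 \cdots p_{k+1})\mathfrak{P}(\mathcal{M})$. The right-hand side of the claim is $\Lambda_{p_1 \cdots p_{k+1}}\mathfrak{P}(\mathcal{M})$, and since $\mathfrak{P}(\mathcal{M})$ is trace-annihilating, \eqref{eq:LambdaL} again collapses this to $(p_1 \cdots p_{k+1})\mathfrak{P}(\mathcal{M})$, matching the left-hand side.

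The one point that needs care — and what I would flag as the main obstacle to a careless argument — is that one cannot simply merge $(p_1 \cdots p_k)\Lambda_{p_{k+1}}$ into $\Lambda_{p_1 \cdots p_{k+1}}$, since these two superoperators differ on the trace part: the coefficients $(p_1 \cdots p_k)(1-p_{k+1})$ and $1 - p_1 \cdots p_{k+1}$ of $\frac{I}{n}\operatorname{Tr}(\,\cdot\,)$ do not agree in general. The identity survives only because both superoperators are ultimately applied to the trace-annihilating operator $\mathfrak{P}(\mathcal{M})$, on which each $\Lambda_q$ acts purely as multiplication by $q$; it is precisely this confinement to the trace-annihilating subspace, guaranteed by \eqref{eq:PL}, that makes the two scalar prefactors coincide. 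An equivalent route is induction on $k$, using the $k=1$ instance together with \eqref{eq:PLambdaL} and the semigroup relation \eqref{eq:LambdaComm}, but the direct collapse above seems the most economical.
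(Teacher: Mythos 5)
Your argument for item 3 is correct, and it is essentially the paper's own route: collapse each interior pair $\Lambda_{p_i}\mathcal{L}_i$ into $p_i\mathcal{L}_i$ via \eqref{eq:LambdaL}, observe that $\mathcal{M}=\mathcal{L}_1\cdots\mathcal{L}_k$ is again trace-annihilating, move the remaining $\Lambda_{p_{k+1}}$ through $\mathfrak{P}$ via \eqref{eq:PLambdaL}, and identify scalars on the trace-annihilating superoperator $\mathfrak{P}(\mathcal{M})$. The only cosmetic difference is at the end: the paper re-absorbs the scalar inside the hyperprojector, writing $p_1\cdots p_{k+1}\mathfrak{P}(\mathcal{M})=\mathfrak{P}(\Lambda_{p_1\cdots p_{k+1}}\mathcal{M})=\Lambda_{p_1\cdots p_{k+1}}\mathfrak{P}(\mathcal{M})$, whereas you evaluate the right-hand side directly through \eqref{eq:PL} and \eqref{eq:LambdaL}; your remark that $(p_1\cdots p_k)\Lambda_{p_{k+1}}$ and $\Lambda_{p_1\cdots p_{k+1}}$ differ as superoperators and agree only on the trace-annihilating subspace is a correct and useful point that the paper leaves implicit.

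The gap is one of completeness: the statement is the whole three-item lemma, and you prove only the third item, citing \eqref{eq:LambdaL}, \eqref{eq:LLambda} and \eqref{eq:PLambdaL} as known ``properties'' when they are in fact items 1 and 2 of the very lemma under proof. Item 1 is harmless — it is a one-line consequence of \eqref{eq:LambdaDef}, since the term $(1-p)\frac{I}{n}\operatorname{Tr}\mathcal{L}(\;\cdot\;)$ vanishes for trace-annihilating $\mathcal{L}$. Item 2 is where real work is required, and it cannot be waved through: $\mathcal{L}$ itself is not trace-preserving, so the preceding lemma's commutativity property \eqref{eq:leftRightActionComm} does not apply to it directly. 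The paper's device is to apply \eqref{eq:leftRightActionComm} to the trace-preserving semigroup $e^{\mathcal{L}t}$, obtaining $\mathfrak{P}(\Lambda_p e^{\mathcal{L}t})=\Lambda_p\mathfrak{P}(e^{\mathcal{L}t})$, and then to match first-order Taylor coefficients at $t=0$; alternatively, all four equalities in \eqref{eq:PLambdaL} can be verified directly from \eqref{eq:fullAverProjExplicit}, \eqref{eq:PL} and \eqref{eq:LLambda}, using that $\mathfrak{P}(\mathcal{L}(I)\operatorname{Tr}(\;\cdot\;))=0$ when $\operatorname{Tr}\mathcal{L}(I)=0$. Without some such argument, your derivation of \eqref{eq:LambdaCommFromP} rests on an unproven ingredient, so you should either supply a proof of item 2 along one of these lines or state explicitly that you are proving item 3 conditional on the earlier items.
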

\begin{proof}
	\begin{enumerate}
		\item As $\operatorname{Tr}  \mathcal{L}(\; \cdot \;) = 0$, then using definition \eqref{eq:LambdaDef} we obtain
		\begin{equation*}
			\Lambda_{p} \mathcal{L} = p  \mathcal{L} + (1-p)\frac{I}{n} \operatorname{Tr}  \mathcal{L}(\; \cdot \;) = p \mathcal{L},
		\end{equation*}
		and similarly we obtain \eqref{eq:LLambda}.
		\item If $\operatorname{Tr}  \mathcal{L}(\; \cdot \;) = 0$, then $ e^{\mathcal{L} t} $ is a trace preserving map.	From Equation \eqref{eq:leftRightActionComm} we have
		\begin{equation*}
			\mathfrak{P}(  \Lambda_p e^{\mathcal{L} t})  =  \Lambda_p \mathfrak{P}(e^{\mathcal{L} t})  .
		\end{equation*}
		Expanding in the Taylor series around $t=0$ we have
		\begin{equation*}
			\mathfrak{P}(  \Lambda_p ) + 	\mathfrak{P}(  \Lambda_p \mathcal{L}) t + O(t^2)  =   \Lambda_p  \mathfrak{P}(\mathcal{I} ) + \Lambda_p	\mathfrak{P}(   \mathcal{L}) t + O(t^2), \qquad t \rightarrow 0
		\end{equation*}
		The zeroth order in $t$ is equivalent to \eqref{eq:sattionarySuperoperator}. The first order leads to $	\mathfrak{P}(\Lambda_{p} \mathcal{L} ) = \Lambda_{p}  \mathfrak{P}(\mathcal{L} )$. Similarly, other equalities of \eqref{eq:PLambdaL} can be obtained.
		\item From formulae \eqref{eq:LambdaL} and \eqref{eq:PLambdaL} we have
		\begin{align*}
			\mathfrak{P}(\Lambda_{p_1} \mathcal{L}_1 \ldots  \Lambda_{p_k} \mathcal{L}_k \Lambda_{p_{k+1}} ) = p_1 \ldots p_k \mathfrak{P}( \mathcal{L}_1 \ldots   \mathcal{L}_k \Lambda_{p_{k+1}} )
			&=p_1 \ldots p_k\mathfrak{P}(  \Lambda_{p_{k+1}}  \mathcal{L}_1 \ldots   \mathcal{L}_k ) \\
			= p_1 \ldots p_{k+1} \mathfrak{P}( \mathcal{L}_1 \ldots   \mathcal{L}_k ) =  \mathfrak{P}(\Lambda_{p_1 \ldots p_{k+1}} \mathcal{L}_1 \ldots   \mathcal{L}_k )
			&= \Lambda_{p_1 \ldots p_{k+1}} \mathfrak{P}( \mathcal{L}_1 \ldots   \mathcal{L}_k ).
		\end{align*}
	\end{enumerate}
\end{proof}

Let us remark that from \eqref{eq:LambdaL} and \eqref{eq:LLambda} we have
\begin{equation*}
	[\mathcal{L}, \Lambda_{p}]= p(1-p)\frac{\mathcal{L}(I)}{n} \operatorname{Tr}  (\; \cdot \;) ,
\end{equation*}
so for generators of non-unital semigroups, i.e. $ \mathcal{L}(I) \neq 0$, $\mathcal{L}$ and $\Lambda_{p}$ do not commute. But \eqref{eq:LambdaCommFromP} means that they are effectively commutative inside the action of hyperprojector $\mathfrak{P}$. 

\section{Depolarizing dynamics and depolarization rate}
\label{se:dynamics}

Let us consider the dynamics with the generator 
\begin{equation}\label{eq:generatorOfDepolarizing}
	\mathcal{L}_0 = \gamma (\Lambda_p - \mathcal{I}), \qquad \gamma >0, \qquad p \in \left[-\frac{1}{n^2 -1}, 1\right],
\end{equation}
or more explicitly 
\begin{equation*}
	\mathcal{L}_0 X =  -  (1-p) \gamma \left(X - \frac{I}{n} \operatorname{Tr}  X\right).
\end{equation*}

Let us remark that $\mathcal{L}_0 $ is a GKSL generator \cite[Lemma 1]{Wolf2008}, because using the Kraus representation of
\begin{equation*}
	\Lambda_p = \sum_k W_k \; \cdot \; W_k^{\dagger}, \qquad  \sum_k  W_k^{\dagger} W_k = I,
\end{equation*}
we have 
\begin{equation*}
	\mathcal{L}_0 = \gamma \left(\Lambda_p  - \frac12 \{I, \; \cdot \;\}\right) =  \gamma \sum_k \left(W_k \; \cdot \; W_k^{\dagger}  - \frac12 \{W_k^{\dagger} W_k, \; \cdot \;\}\right).
\end{equation*}
Thus, $ \sqrt{\gamma} W_k$ can be taken as Lindblad (jump) operators. Depolarizing channel $\Lambda_p $ is not specific here, and $\mathcal{L}_0$ has GKSL form for any channels instead of  $\Lambda_p $. Moreover,  in general case it arises as averaging of evolution  $\Lambda_p^k$ with discrete time $k$ with respect to the Poisson process with intensity $\gamma$ \cite[Section 6]{Nosal2022}.

Let us show that free dynamics is a depolarizing channel at any fixed time.

\begin{lemma}\label{lem:freeDynamics}
	Let $\mathcal{L}_0 $ be defined by formula \eqref{eq:generatorOfDepolarizing}, then
	\begin{equation}\label{eq:freeDynamics}
		e^{\mathcal{L}_0 t} = e^{-(1-p) \gamma t} \mathcal{I} + (1-e^{-(1-p) \gamma t})\frac{I}{n} \operatorname{Tr}  (\; \cdot \;) =  \Lambda_{ e^{-(1-p) \gamma t} },
	\end{equation}
	where $\Lambda_p$ is defined by formula \eqref{eq:LambdaDef}.
\end{lemma}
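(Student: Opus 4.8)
The plan is to reduce everything to the Abelian semigroup property \eqref{eq:LambdaComm} by exponentiating the power series. First I would observe that the identity superoperator is itself a depolarizing superoperator, $\mathcal{I} = \Lambda_1$, so the generator \eqref{eq:generatorOfDepolarizing} reads $\mathcal{L}_0 = \gamma(\Lambda_p - \Lambda_1)$. Since $\mathcal{I}$ commutes with every superoperator, the summands $\gamma \Lambda_p$ and $-\gamma\mathcal{I}$ commute, and the exponential factors:
\[
	e^{\mathcal{L}_0 t} = e^{-\gamma t}\, e^{\gamma t \Lambda_p}.
\]

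Next I would compute $e^{\gamma t \Lambda_p}$ directly from its series. By \eqref{eq:LambdaComm} one has $\Lambda_p^k = \Lambda_{p^k}$ for every $k \geqslant 0$ (the case $k=0$ reading $\mathcal{I} = \Lambda_1 = \Lambda_{p^0}$), so inserting the explicit form \eqref{eq:LambdaDef} of $\Lambda_{p^k}$ and splitting the series into its $\mathcal{I}$-part and its $\frac{I}{n}\operatorname{Tr}(\;\cdot\;)$-part yields two scalar exponential series which resum to $e^{\gamma t p}$ and $e^{\gamma t}-e^{\gamma t p}$ respectively. Multiplying by the prefactor $e^{-\gamma t}$ and using $e^{-\gamma t}e^{\gamma t p}=e^{-(1-p)\gamma t}$ then collapses the expression to $e^{-(1-p)\gamma t}\mathcal{I} + (1-e^{-(1-p)\gamma t})\frac{I}{n}\operatorname{Tr}(\;\cdot\;)$, which is precisely \eqref{eq:LambdaDef} evaluated at $e^{-(1-p)\gamma t}$, i.e. $\Lambda_{e^{-(1-p)\gamma t}}$.

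There is essentially no hard step: the only things to watch are the bookkeeping of the $k=0$ term when invoking $\Lambda_p^k = \Lambda_{p^k}$, and the verification that the two scalar series recombine into $e^{-(1-p)\gamma t}$ once the prefactor is absorbed. Conceptually this is just the statement that $\Lambda_p$ is diagonalizable with eigenvalue $1$ on $\operatorname{span}\{I\}$ and eigenvalue $p$ on the traceless subspace, so $\mathcal{L}_0$ has eigenvalues $0$ and $-(1-p)\gamma$ there, and exponentiating each eigenvalue reproduces a depolarizing superoperator with parameter $e^{-(1-p)\gamma t}$.

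As an alternative route I would verify directly that $t \mapsto \Lambda_{e^{-(1-p)\gamma t}}$ solves the defining Cauchy problem. At $t=0$ it equals $\Lambda_1 = \mathcal{I}$; differentiating via \eqref{eq:LambdaDef} gives $\frac{d}{dt}\Lambda_{e^{-(1-p)\gamma t}} = -(1-p)\gamma\, e^{-(1-p)\gamma t}\bigl(\mathcal{I} - \frac{I}{n}\operatorname{Tr}(\;\cdot\;)\bigr)$, while using $\mathcal{L}_0 \Lambda_q = \gamma(\Lambda_{pq}-\Lambda_q)$ from \eqref{eq:LambdaComm} one checks that $\mathcal{L}_0\Lambda_{e^{-(1-p)\gamma t}}$ equals the same expression. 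Uniqueness of solutions of the linear Cauchy problem then gives the claim.
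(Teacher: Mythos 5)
Your proof is correct, but it runs on different algebraic fuel than the paper's. The paper splits $\mathcal{L}_0 = -(1-p)\gamma\,\mathcal{I} + (1-p)\gamma\,\tfrac{I}{n}\operatorname{Tr}(\;\cdot\;)$, factors the exponential, and resums $e^{(1-p)\gamma t\,\frac{I}{n}\operatorname{Tr}(\;\cdot\;)}$ using only the idempotency of $\tfrac{I}{n}\operatorname{Tr}(\;\cdot\;)$ --- a self-contained computation that never invokes the $\Lambda_p$ algebra until the final identification. You instead split $\mathcal{L}_0 = \gamma\Lambda_p - \gamma\mathcal{I}$ and resum $e^{\gamma t \Lambda_p}$ through $\Lambda_p^k = \Lambda_{p^k}$, i.e.\ through the Abelian semigroup property \eqref{eq:LambdaComm}; the skeleton (commuting split plus series resummation) is the same, but your key input is the multiplicative structure of the family $\{\Lambda_p\}$ rather than idempotency of the trace projector. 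What each buys: your route stays inside the depolarizing algebra on which the rest of the paper is built, at the small cost of importing the earlier lemma; the paper's version is more elementary and needs no prior results. Your alternative argument --- checking that $t \mapsto \Lambda_{e^{-(1-p)\gamma t}}$ solves the Cauchy problem, with $\Lambda_1 = \mathcal{I}$ at $t=t_0$, and appealing to uniqueness --- is the genuinely distinct route: it avoids power series entirely, needs only $\mathcal{L}_0 \Lambda_q = \gamma(\Lambda_{pq} - \Lambda_q)$ and differentiation of \eqref{eq:LambdaDef}, and would generalize verbatim to any one-parameter family closed under multiplication. Both of your computations check out, including the $k=0$ bookkeeping and the spectral picture (eigenvalue $1$ on $\operatorname{span}\{I\}$, eigenvalue $p$ on the traceless subspace), which is indeed the conceptual content of the lemma.
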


\begin{proof}
	As
	\begin{equation*}
		\left(\frac{I}{n} \operatorname{Tr}  (\; \cdot \;)\right)^2 = \frac{I}{n} \operatorname{Tr}  (\; \cdot \;) \frac{I}{n} \operatorname{Tr}  (\; \cdot \;) = \frac{I}{n} \operatorname{Tr}  (\; \cdot \;) 
	\end{equation*}
	then via the Taylor expansion one has
	\begin{equation*}
		e^{ (1-p) \gamma t \frac{I}{n} \operatorname{Tr}  (\; \cdot \;) }  = (e^{(1-p) \gamma t}-1)\frac{I}{n} \operatorname{Tr}  (\; \cdot \;) .
	\end{equation*}
	Then we have
	\begin{align*}
		e^{\mathcal{L}_0 t} = e^{-(1-p) \gamma t} e^{ (1-p) \gamma t \frac{I}{n} \operatorname{Tr}  (\; \cdot \;) } &=  e^{-(1-p) \gamma t} \left(\mathcal{I} + (e^{(1-p) \gamma t}-1)\frac{I}{n} \operatorname{Tr}  (\; \cdot \;)\right)\\
		& =  e^{-(1-p) \gamma t} \mathcal{I} + (1-e^{-(1-p)  \gamma t})\frac{I}{n} \operatorname{Tr}  (\; \cdot \;).
	\end{align*}
	Thus, we have obtained Equation \eqref{eq:freeDynamics}.
\end{proof}

So it is natural to call $\tilde{\gamma}_0 = (1-p) \gamma$ a free depolarization rate. Now, let us justify condition \eqref{eq:commAssump} for free dynamics.

\begin{lemma}
	Condition \eqref{eq:commAssump} is met for $\mathfrak{P}$  defined by Equation \eqref{eq:fullAverProj} and $\mathcal{L}_0$ defined by Equation \eqref{eq:generatorOfDepolarizing}.
\end{lemma}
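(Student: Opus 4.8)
The plan is to reduce the statement to the unitary covariance of the depolarizing channel. First I would invoke Lemma~\ref{lem:freeDynamics} to rewrite the free propagator as a depolarizing channel, $e^{\mathcal{L}_0 t} = \Lambda_q$ with $q = e^{-(1-p)\gamma t}$. Condition \eqref{eq:commAssump} then becomes the assertion that left and right multiplication by $\Lambda_q$ commutes with the twirl $\mathfrak{P}$, that is $\mathfrak{P}(\Lambda_q \Phi) = \Lambda_q \mathfrak{P}(\Phi)$ and $\mathfrak{P}(\Phi \Lambda_q) = \mathfrak{P}(\Phi) \Lambda_q$ for \emph{every} superoperator $\Phi$.

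The key structural fact I would establish is that $\Lambda_q$ is unitarily covariant. Writing $\mathcal{U}_V$ for the conjugation superoperator $\mathcal{U}_V(X) = V X V^{\dagger}$, a one-line computation using $\operatorname{Tr}(V X V^{\dagger}) = \operatorname{Tr} X$ and $V I V^{\dagger} = I$ gives $\Lambda_q \mathcal{U}_V = \mathcal{U}_V \Lambda_q$ for every $V \in U(n)$. Thus $\Lambda_q$ commutes with each conjugation being averaged in the definition \eqref{eq:fullAverProj} of $\mathfrak{P}$.

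Then I would write the twirl as an integral of conjugated copies of $\Phi$, namely $\mathfrak{P}(\Phi) = \int_{U(n)} \mathcal{U}_{U}^{\dagger} \Phi\, \mathcal{U}_{U}\, dU$ with $\mathcal{U}_U^{\dagger} = \mathcal{U}_{U^{\dagger}}$, the products being compositions of superoperators. Inserting $\Lambda_q$ on the left, commuting it through $\mathcal{U}_U^{\dagger}$ by covariance, and then pulling the fixed superoperator $\Lambda_q$ out of the Haar integral by linearity yields $\mathfrak{P}(\Lambda_q \Phi) = \Lambda_q \mathfrak{P}(\Phi)$; the right-hand identity follows symmetrically by commuting $\Lambda_q$ through $\mathcal{U}_U$ on the right. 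Substituting $q = e^{-(1-p)\gamma t}$ recovers precisely \eqref{eq:commAssump}.

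The argument is elementary once covariance is in hand; the only point requiring a little care is bookkeeping the distinction between the conjugation superoperators $\mathcal{U}_U$ and the operators on which they act, together with justifying that the constant superoperator $\Lambda_q$ may be taken outside the Haar integral. I should stress that this covariance route needs no trace-preservation hypothesis on $\Phi$, which is essential here since \eqref{eq:commAssump} is demanded for arbitrary $\Phi$, whereas property \eqref{eq:leftRightActionComm} alone only covers the trace-preserving case. As a fallback one could instead substitute directly into the explicit formula \eqref{eq:fullAverProjExplicit}, checking that $\Lambda_q \Phi$ and $\Phi \Lambda_q$ yield the same values of the two functionals $\operatorname{Tr}(\Psi(I))$ and $\operatorname{tr}(\Psi)$ that determine $\mathfrak{P}(\Psi)$ and that these transform as required, but the covariance argument is cleaner.
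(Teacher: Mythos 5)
Your proof is correct, and its key step is genuinely different from the paper's. Both arguments open identically: Lemma~\ref{lem:freeDynamics} turns the free propagator into a depolarizing channel, $e^{\mathcal{L}_0 t} = \Lambda_q$ with $q = e^{-(1-p)\gamma t}$. The paper then finishes in one line by citing property \eqref{eq:leftRightActionComm}, whose own proof went through the explicit twirl formula \eqref{eq:fullAverProjExplicit} --- essentially the ``fallback'' computation you mention at the end. You instead prove the needed commutation directly: $\Lambda_q$ commutes with every unitary conjugation $\mathcal{U}_V$ (since $\operatorname{Tr}(VXV^{\dagger}) = \operatorname{Tr} X$ and $VIV^{\dagger} = I$), and the fixed superoperator $\Lambda_q$ can be pulled out of the Haar integral by linearity, giving $\mathfrak{P}(\Lambda_q\Phi) = \Lambda_q\mathfrak{P}(\Phi)$ and $\mathfrak{P}(\Phi\Lambda_q) = \mathfrak{P}(\Phi)\Lambda_q$ for arbitrary $\Phi$. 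Your remark about trace preservation is well taken and is where your route buys something real: condition \eqref{eq:commAssump} is demanded for \emph{arbitrary} superoperators $\Phi$, whereas \eqref{eq:leftRightActionComm} is stated only for trace-preserving $\Phi$, so the paper's citation, as literally written, does not cover the full claim. That gap is cosmetic rather than fatal --- the paper's computations establishing the two outer equalities of \eqref{eq:leftRightActionComm} via \eqref{eq:fullAverProjExplicit} never actually use trace preservation of $\Phi$; only the middle equality $\Lambda_p\mathfrak{P}(\Phi) = \mathfrak{P}(\Phi)\Lambda_p$, obtained there through $\mathfrak{P}(\Phi) = \Lambda_{p'}$, needs it --- but your covariance argument removes the caveat altogether and isolates the structural reason the lemma holds: any superoperator commuting with all unitary conjugations passes through the twirl, the depolarizing form of $\Lambda_q$ playing no role beyond supplying that covariance.
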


\begin{proof}
	From lemma \ref{lem:freeDynamics} and Equation \eqref{eq:leftRightActionComm} we have
	\begin{equation*}
		\mathfrak{P}(e^{\mathcal{L}_0 t} \Phi) = \mathfrak{P}(\Lambda_{ e^{-(1-p) \gamma t} } \Phi) = \Lambda_{ e^{-(1-p) \gamma t} }\mathfrak{P}( \Phi) = e^{\mathcal{L}_0 t} \mathfrak{P}( \Phi)
	\end{equation*}
	and similarly
	\begin{equation*}
		\mathfrak{P}(\Phi e^{\mathcal{L}_0 t}) = \mathfrak{P}(\Phi \Lambda_{ e^{-(1-p) \gamma t} }) = \mathfrak{P}( \Phi) \Lambda_{ e^{-(1-p) \gamma t} } =  \mathfrak{P}( \Phi) e^{\mathcal{L}_0 t}.
	\end{equation*}
\end{proof}

Thus, the projector \eqref{eq:fullAverProj} is consistent with free dynamics $e^{\mathcal{L}_0 t}$, where $\mathcal{L}_0$ is defined by Equation \eqref{eq:generatorOfDepolarizing}. Now let us consider perturbated dynamics \eqref{eq:propShroedPicture} with arbitrary superoperator $\mathcal{L}_I$ such that $\mathcal{L}_I(X) = 0$ for any matrix $X \in \mathbb{C}^{n \times n}$.

\begin{theorem}
	Let $\Phi(t, t_0)$ be defined as a solution of Cauchy problem \eqref{eq:propShroedPicture}, where $\mathcal{L}_0$ is defined by \eqref{eq:generatorOfDepolarizing} and  $\operatorname{Tr}  \mathcal{L}_I(X) = 0$ for any matrix $X \in \mathbb{C}^{n \times n}$. Let $\mathfrak{P}$ be defined by \eqref{eq:fullAverProj}. Then $\mathfrak{P}(\Phi(t, t_0))$ satisfies~\eqref{eq:projectedpropShroedPicture} for fixed $t$ and $t_0$ and sufficiently small $\lambda$ and the following asymptotic expansion holds
	\begin{equation*}
		\mathfrak{K}(t) = \sum_{n=0}^{\infty} \lambda^k \mathfrak{K}_k(t), \qquad \lambda \rightarrow 0,
	\end{equation*}
	where $\mathfrak{K}_0(t) = \mathcal{L}_0 $ and
	\begin{equation}\label{eq:cumulantsForProjExpl}
		\mathfrak{K}_{k}(t) =  (t-t_0)^{k-1} \sum_{q=0}^{k-1} (-1)^q \sum_{\sum_{j=0}^q k_j =k, k_j \geqslant 1} \frac{1}{(k_0 - 1)!  k_1! \cdots k_q!} \mathfrak{P}  (\mathcal{L}_I^{k_0-1} ) \ldots (\mathcal{L}_I^{k_1} ) .
	\end{equation}
\end{theorem}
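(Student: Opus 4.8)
The plan is to compute the building blocks $\mathfrak{M}_k(t)$ and $\grave{\mathfrak{M}}_k(t)$ of Corollary~\ref{cor:cumulantExpansion} in closed form for this depolarizing free dynamics, and then assemble the cumulants. The decisive simplification comes from Lemma~\ref{lem:freeDynamics}: every free propagator $e^{\mathcal{L}_0 s}$ occurring inside the integrands \eqref{eq:Mk} and \eqref{eq:Mgrk} equals the depolarizing superoperator $\Lambda_{e^{-(1-p)\gamma s}}$. Substituting this, the integrand of $\mathfrak{M}_k(t)$ becomes, inside $\mathfrak{P}$, an alternating product of $\Lambda$-factors and $k$ copies of $\mathcal{L}_I$.

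First I would apply the effective-commutativity identity \eqref{eq:LambdaCommFromP}, which is available precisely because $\operatorname{Tr}\mathcal{L}_I(X)=0$ by hypothesis. It collapses all the interspersed $\Lambda$-factors into a single $\Lambda$ whose index is the product of the individual indices, namely $\Lambda_{\exp(-(1-p)\gamma\sigma)}$ with $\sigma$ the sum of all the time differences appearing in the exponents. For $\mathfrak{M}_k(t)$ these differences are $(t-t_k),(t_k-t_{k-1}),\dots,(t_2-t_1),(t_1-t)$, whose sum telescopes to $0$; the same telescoping occurs for the exponents $(t-t_{k-1}),\dots,(t_1-t)$ in $\grave{\mathfrak{M}}_k(t)$. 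Hence the collected factor is $\Lambda_1=\mathcal{I}$ and each integrand reduces to the constant superoperator $\mathfrak{P}(\mathcal{L}_I^k)$, independent of the integration variables. This telescoping is the crux of the argument and the step I expect to do all the work: it is what trivializes an otherwise complicated time-ordered integral, and it rests entirely on the trace-annihilating property of $\mathcal{L}_I$ through \eqref{eq:LambdaCommFromP}.

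With the integrands now constant, the remaining nested integrals are simply volumes of ordered time simplices. The $k$-fold ordered integral in \eqref{eq:Mk} contributes $(t-t_0)^k/k!$, giving $\mathfrak{M}_k(t)=\frac{(t-t_0)^k}{k!}\,\mathfrak{P}(\mathcal{L}_I^k)$, and the $(k-1)$-fold integral in \eqref{eq:Mgrk} contributes $(t-t_0)^{k-1}/(k-1)!$, giving $\grave{\mathfrak{M}}_k(t)=\frac{(t-t_0)^{k-1}}{(k-1)!}\,\mathfrak{P}(\mathcal{L}_I^k)$.

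Finally I would substitute these two expressions into the composition sum $\mathfrak{K}_k(t)=\sum_{q=0}^{k-1}(-1)^q\sum_{k_0+\cdots+k_q=k}\grave{\mathfrak{M}}_{k_0}(t)\,\mathfrak{M}_{k_1}(t)\cdots\mathfrak{M}_{k_q}(t)$ of Corollary~\ref{cor:cumulantExpansion}. For each composition the powers of $(t-t_0)$ combine as $(k_0-1)+k_1+\cdots+k_q=k-1$, so the overall factor $(t-t_0)^{k-1}$ pulls out; the factorials assemble into $1/[(k_0-1)!\,k_1!\cdots k_q!]$; and the projected blocks line up as $\mathfrak{P}(\mathcal{L}_I^{k_0})\,\mathfrak{P}(\mathcal{L}_I^{k_1})\cdots\mathfrak{P}(\mathcal{L}_I^{k_q})$. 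This is exactly the claimed formula \eqref{eq:cumulantsForProjExpl}, while the zeroth-order term $\mathfrak{K}_0(t)=\mathcal{L}_0$ is inherited verbatim from Corollary~\ref{cor:cumulantExpansion}.
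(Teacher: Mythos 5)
Your proposal is correct and follows essentially the same route as the paper's own proof: replace each $e^{\mathcal{L}_0 s}$ by $\Lambda_{e^{-(1-p)\gamma s}}$ via Lemma~\ref{lem:freeDynamics}, collapse the interspersed $\Lambda$-factors with the effective-commutativity identity \eqref{eq:LambdaCommFromP} so that the telescoping exponents give $\Lambda_1=\mathcal{I}$, evaluate the resulting constant-integrand simplex integrals to get $\mathfrak{M}_k(t)=\mathfrak{P}(\mathcal{L}_I^k)\frac{(t-t_0)^k}{k!}$ and $\grave{\mathfrak{M}}_k(t)=\mathfrak{P}(\mathcal{L}_I^k)\frac{(t-t_0)^{k-1}}{(k-1)!}$, and substitute into Corollary~\ref{cor:cumulantExpansion}. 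In fact your write-up spells out the telescoping of the $\Lambda$-indices more explicitly than the paper does, but the argument is identical.
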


\begin{proof}
	From \eqref{eq:LambdaCommFromP} and \eqref{eq:freeDynamics} 
	\begin{align*}
		&\mathfrak{P}  ( e^{ \mathcal{L}_0(t- t_{k})}  \mathcal{L}_I e^{ \mathcal{L}_0(t_k - t_{k-1})} \ldots  e^{ \mathcal{L}_0(t_2 - t_{1})}\mathcal{L}_I e^{\mathcal{L}_0(t_1-t)}) \\
		&= \mathfrak{P}  ( \Lambda_{ e^{-(1-p) \gamma (t - t_k)} } \mathcal{L}_I \Lambda_{ e^{-(1-p) \gamma (t_k - t_{k-1})} } \ldots  \Lambda_{ e^{-(1-p) \gamma (t_2 - t_1)} }\mathcal{L}_I \Lambda_{ e^{-(1-p) \gamma (t_1 - t)} }) \\
		&= \Lambda_{1} \mathfrak{P}  (\mathcal{L}_I^k ) = \mathfrak{P}  (\mathcal{L}_I^k ).
	\end{align*}
	Substituting it into \eqref{eq:Mk} we have
	\begin{equation*}
		\mathfrak{M}_{k}(t) =  \mathfrak{P}  (\mathcal{L}_I^k )\frac{(t-t_0)^k}{k!}.
	\end{equation*}
	Similarly, substituting it into \eqref{eq:Mgrk} we have
	\begin{equation*}
		\grave{\mathfrak{M}}_{k}(t) =  \mathfrak{P}  (\mathcal{L}_I^k )\frac{(t-t_0)^{k-1}}{(k-1)!}.
	\end{equation*}
	Thus, we have
	\begin{align*}
		\grave{\mathfrak{M}}_{k_0}(t) \mathfrak{M}_{k_1}(t)  \ldots \mathfrak{M}_{k_{q}}(t) =   \mathfrak{P}  (\mathcal{L}_I^{k_0} )\frac{(t-t_0)^{k_0-1}}{(k_0-1)!} \mathfrak{P}  (\mathcal{L}_I^{k_1} )\frac{(t-t_0)^{k_1}}{k_1!} \ldots \mathfrak{P}  (\mathcal{L}_I^{k_q} )\frac{(t-t_0)^{k_q}}{k_q!}\\
		=\frac{1}{(k_0 - 1)!  k_1! \cdots k_q!} \mathfrak{P}  (\mathcal{L}_I^{k_0-1} ) \ldots (\mathcal{L}_I^{k_1} ) (t-t_0)^{\sum_{j=0}^q k_j-1}
	\end{align*}
	and by Corollary \ref{cor:cumulantExpansion} we obtain \eqref{eq:cumulantsForProjExpl}.
\end{proof}

In particular, second order generator \eqref{eq:secOrd} has the form 
\begin{equation}\label{eq:secOrdTwirlingAllUnitary}
	\mathfrak{K}(t)  = \mathcal{L}_0 + \lambda \mathfrak{P}(  \mathcal{L}_I  ) + \lambda^2  (t-t_0) \left( \mathfrak{P}  (   \mathcal{L}_I^2) - (\mathfrak{P}  (   \mathcal{L}_I ))^2 \right) +O(\lambda^3), \qquad \lambda \rightarrow 0.
\end{equation}

To calculate it more explicitly we need the following lemma.
\begin{lemma}\label{lemm:trace}
	Let $A, B \in \mathbb{C}^{n \times n}$, then
	\begin{equation*}
		\operatorname{tr}( A \; \cdot \; B) =\operatorname{Tr}  A \operatorname{Tr}  B,
	\end{equation*}
	in particular, 
	\begin{equation*}
		\operatorname{tr}( A \; \cdot \;  ) = 	\operatorname{tr}(  \; \cdot \;  A ) = n \operatorname{Tr}  A , \qquad \operatorname{tr}( B \; \cdot \; B^{\dagger}) = |\operatorname{Tr}  B|^2.
	\end{equation*}
\end{lemma}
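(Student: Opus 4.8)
The plan is to compute the superoperator trace directly from its definition \eqref{eq:traceOfSuperoperator} and observe that it factorizes. Recall that the superoperator $A \;\cdot\; B$ acts on a matrix $X$ as $X \mapsto A X B$, so in particular $(A \;\cdot\; B)(|k\rangle\langle m|) = A |k\rangle\langle m| B$. Substituting this into the definition of $\operatorname{tr}$ gives
\begin{equation*}
	\operatorname{tr}(A \;\cdot\; B) = \sum_{k,m} \langle k| A |k\rangle\langle m| B |m\rangle .
\end{equation*}
The summand factors as a product of a quantity depending only on $k$ and a quantity depending only on $m$, so the double sum splits into a product of two independent sums.

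The key step, then, is simply to recognize that $\sum_k \langle k| A |k\rangle = \operatorname{Tr} A$ and $\sum_m \langle m| B |m\rangle = \operatorname{Tr} B$, whence $\operatorname{tr}(A \;\cdot\; B) = \operatorname{Tr} A \,\operatorname{Tr} B$. This is the entire content of the first identity, and there is essentially no obstacle beyond carefully unfolding the notation for the ``dot'' placeholder and the definition \eqref{eq:traceOfSuperoperator}.

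For the special cases I would proceed by substitution. Taking $B = I$ yields $\operatorname{tr}(A \;\cdot\;) = \operatorname{Tr} A \,\operatorname{Tr} I = n \operatorname{Tr} A$, and symmetrically $A = I$ yields $\operatorname{tr}(\;\cdot\; A) = n \operatorname{Tr} A$, using $\operatorname{Tr} I = n$. Finally, replacing $A$ by $B$ and $B$ by $B^{\dagger}$ gives $\operatorname{tr}(B \;\cdot\; B^{\dagger}) = \operatorname{Tr} B \,\operatorname{Tr} B^{\dagger}$, and since $\operatorname{Tr} B^{\dagger} = \overline{\operatorname{Tr} B}$ this equals $|\operatorname{Tr} B|^2$. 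All three particular identities thus follow immediately from the general one, and no further computation is required.
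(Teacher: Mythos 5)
Your proof is correct and follows exactly the same route as the paper's: unfold the definition \eqref{eq:traceOfSuperoperator} applied to $(A\;\cdot\;B)(|k\rangle\langle m|) = A|k\rangle\langle m|B$, factor the double sum into $\operatorname{Tr} A \,\operatorname{Tr} B$, and obtain the special cases by substituting $B=I$, $A=I$, and $(A,B)\to(B,B^{\dagger})$ together with $\operatorname{Tr} B^{\dagger} = \overline{\operatorname{Tr} B}$. No gaps.
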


\begin{proof}
	Using definition \eqref{eq:traceOfSuperoperator}, we have  \cite[Equation (7)]{Ende2023}
	\begin{equation*}
		\operatorname{tr}( A \; \cdot \; B) \equiv \sum_{k, m} \langle k|A|k\rangle \langle m| B|m\rangle  = \operatorname{Tr}  A \operatorname{Tr}  B.
	\end{equation*}
	In particular, $\operatorname{tr}( A \; \cdot \;  ) = \operatorname{tr}( A \; \cdot \;  I) = \operatorname{Tr}  A \operatorname{Tr}  I = n\operatorname{Tr}  A$, similarly, $\operatorname{tr}(  \; \cdot \;  A) = n\operatorname{Tr}  A $, and $\operatorname{tr}( B \; \cdot \; B^{\dagger}) = \operatorname{Tr}  B \operatorname{Tr}  B^{\dagger}  = |\operatorname{Tr}  B|^2$.
\end{proof}

Now let us assume that $\mathcal{L}_I$ has a GKSL form
\begin{equation}\label{eq:GKSLform}
	\mathcal{L}_I = - i [H, \; \cdot \;] + \sum_j\left(L_j \; \cdot  \; L_j^{\dagger}  -  \frac12 \{L_j^{\dagger} L_j, \; \cdot  \;\}  \right), \qquad H = H^{\dagger}.
\end{equation}
The GKSL form is invariant under transformation \cite[Equation (3.73)]{Breuer2002}
\begin{equation*}
	L_j \rightarrow L_j + c_j I, \qquad H \rightarrow H + \frac{1}{2i} \sum_j (c_j^* L_j - c_j L_j^{\dagger}), \qquad c_i \in \mathbb{C}.
\end{equation*}
So one can choose $L_j$ in such a way that $\operatorname{Tr}  L_j = 0$. 

\begin{lemma}\label{lemm:GKSLtraces}
	If $\mathcal{L}_I $ has GSKL form \eqref{eq:GKSLform} and $\operatorname{Tr}  L_j = 0$, then 
	\begin{equation*}
		\frac{1}{n^2}\operatorname{tr} \mathcal{L}_I  = -  \langle  G\rangle
	\end{equation*}
	and
	\begin{equation*}
		\frac{1}{n^2}\operatorname{tr} \mathcal{L}_I^2 = -2 ( \langle H^2 \rangle -\langle H \rangle^2) + \sum_{j,k} |\langle L_j L_k \rangle|^2 + \frac12 \langle G^2 \rangle + \frac12 \langle G \rangle^2,
	\end{equation*}
	where
	\begin{equation*}
		G = \sum_j L_j^{\dagger} L_j
	\end{equation*}
	and $\langle \; \cdot \; \rangle$ is an average with respect to chaotic state, i.e.  $\langle \; \cdot \; \rangle = n^{-1} \operatorname{Tr} ( \; \cdot \; )$.
\end{lemma}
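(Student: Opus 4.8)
I would treat both identities as applications of Lemma \ref{lemm:trace}, reducing everything to elementary traces of ordinary matrices. The plan rests on two observations. First, every term of a GKSL generator is a left--right multiplication $X \mapsto A X B$, written $A \; \cdot \; B$, and such operators compose by the rule $(A_1 \; \cdot \; B_1)(A_2 \; \cdot \; B_2) = A_1 A_2 \; \cdot \; B_2 B_1$; combining this with Lemma \ref{lemm:trace} and linearity of $\operatorname{tr}$ gives $\operatorname{tr}[(A_1 \; \cdot \; B_1)(A_2 \; \cdot \; B_2)] = \operatorname{Tr}(A_1 A_2)\operatorname{Tr}(B_2 B_1)$. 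Second, $\operatorname{tr}$ is the ordinary Hilbert--Schmidt trace on the $n^2$-dimensional operator space, hence symmetric under exchange of the two superoperator factors.

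First I would split $\mathcal{L}_I = \mathcal{H} + \mathcal{D}$ with $\mathcal{H} = -i[H, \; \cdot \;]$ and $\mathcal{D}$ the dissipator, writing each as a sum of terms $A \; \cdot \; B$. For the first formula I apply $\operatorname{tr}(A \; \cdot \;) = \operatorname{tr}(\; \cdot \; A) = n\operatorname{Tr} A$ from Lemma \ref{lemm:trace}: the commutator contributes $-in\operatorname{Tr} H + in\operatorname{Tr} H = 0$, the gain terms $L_j \; \cdot \; L_j^{\dagger}$ give $|\operatorname{Tr} L_j|^2 = 0$ by the assumption $\operatorname{Tr} L_j = 0$, and the two anticommutator pieces each give $-\tfrac{n}{2}\operatorname{Tr}(L_j^{\dagger}L_j)$, so that $\operatorname{tr}\mathcal{L}_I = -n\operatorname{Tr} G = -n^2\langle G\rangle$, which is the first claim.

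For the second formula I expand $\operatorname{tr}\mathcal{L}_I^2 = \operatorname{tr}\mathcal{H}^2 + 2\operatorname{tr}(\mathcal{H}\mathcal{D}) + \operatorname{tr}\mathcal{D}^2$ using the symmetry noted above. A direct use of the composition rule gives $\mathcal{H}^2 = -H^2 \; \cdot \; I - I \; \cdot \; H^2 + 2H \; \cdot \; H$, whence $\operatorname{tr}\mathcal{H}^2 = -2n\operatorname{Tr}(H^2) + 2(\operatorname{Tr} H)^2$, i.e. $-2n^2(\langle H^2\rangle - \langle H\rangle^2)$ after dividing by $n^2$; this is the first term of the target. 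The mixed term $\operatorname{tr}(\mathcal{H}\mathcal{D})$ vanishes: its gain-type contributions carry a lone factor $\operatorname{Tr} L_j = 0$, while the anticommutator contributions cancel pairwise between the two halves of the commutator after using cyclicity of the trace. This leaves only $\operatorname{tr}\mathcal{D}^2$.

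The main bookkeeping burden is $\operatorname{tr}\mathcal{D}^2 = \sum_{j,k}\operatorname{tr}(D_j D_k)$, where each $D_j$ has three summands, giving nine products per pair $(j,k)$. Here the assumption $\operatorname{Tr} L_j = 0$ is decisive again: every product pairing a gain term $L_j \; \cdot \; L_j^{\dagger}$ with an anticommutator term produces an isolated $\operatorname{Tr} L_j$ or $\operatorname{Tr} L_j^{\dagger}$ and hence vanishes. The survivors are the gain--gain term $\operatorname{tr}[(L_j \; \cdot \; L_j^{\dagger})(L_k \; \cdot \; L_k^{\dagger})] = |\operatorname{Tr}(L_j L_k)|^2$, which sums to $\sum_{j,k}|\langle L_j L_k\rangle|^2$ after normalization, and the four anticommutator--anticommutator terms, which assemble into $\tfrac{n}{2}\operatorname{Tr}(G^2) + \tfrac12(\operatorname{Tr} G)^2$ using $\sum_{j,k} L_j^{\dagger}L_j L_k^{\dagger}L_k = G^2$. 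Dividing by $n^2$ and adding $\operatorname{tr}\mathcal{H}^2$ reproduces the stated identity. I expect the only genuine subtlety to be the careful tracking of signs, the Hermitian conjugation that turns $\operatorname{Tr}(L_k^{\dagger}L_j^{\dagger})$ into $\overline{\operatorname{Tr}(L_j L_k)}$, and the repeated factors $\operatorname{Tr} I = n$; there is no conceptual difficulty once the composition rule and Lemma \ref{lemm:trace} are in hand.
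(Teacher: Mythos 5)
Your proof is correct and takes essentially the same approach as the paper: both reduce every term of $\mathcal{L}_I$ and $\mathcal{L}_I^2$ to ordinary matrix traces via Lemma \ref{lemm:trace}, use the assumption $\operatorname{Tr} L_j = 0$ to eliminate all gain--anticommutator and gain--commutator cross terms, and assemble the surviving contributions using cyclicity of the trace. The only cosmetic difference is that you invoke symmetry of $\operatorname{tr}$ on the superoperator space to fold the mixed contributions into $2\operatorname{tr}(\mathcal{H}\mathcal{D})$, whereas the paper writes out both $\mathcal{D}\mathcal{H}$ and $\mathcal{H}\mathcal{D}$ explicitly and observes that each vanishes.
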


\begin{proof} 
	Applying  \ref{lemm:trace} to \eqref{eq:GKSLform} we have  
	\begin{align*}
		\operatorname{tr} \mathcal{L}_I = - i n \operatorname{Tr}  H + i n \operatorname{Tr}  H + \sum_j (\operatorname{Tr}  L_j  \operatorname{Tr}  L_j^{\dagger} - n \operatorname{Tr}  (L_j^{\dagger} L_j)) 
		= - n^2 \sum_j \langle  L_j^{\dagger} L_j\rangle = - n^2  \langle  G\rangle .
	\end{align*}
	Taking the square of GKSL generator \eqref{eq:GKSLform}
	\begin{align*}
		&\mathcal{L}_I^2 =-  [H, [H, \; \cdot \;] ]  \\
		&- i   \sum_j\left(L_j [H, \; \cdot \;] L_j^{\dagger}  -  \frac12 \{L_j^{\dagger} L_j,  [H, \; \cdot \;]\}\right)
		-i\sum_j\left([H, L_j \; \cdot  \; L_j^{\dagger} ] -  \frac12 [H, \{L_j^{\dagger} L_j, \; \cdot  \;\} ] \right) \\
		&+\sum_{j,k}\left(L_j L_k \; \cdot  \; L_k^{\dagger} L_j^{\dagger}  -  \frac12 \{L_j^{\dagger} L_j, L_k \; \cdot  \; L_k^{\dagger}\}  \right)
		-\frac12 \sum_{j,k}\left(L_j  \{L_k^{\dagger} L_k, \; \cdot  \;\}  L_j^{\dagger}  -  \frac12 \{L_j^{\dagger} L_j,  \{L_k^{\dagger} L_k, \; \cdot  \;\}\}  \right).
	\end{align*}
	Then using Lemma \ref{lemm:trace} we obtain
	\begin{align*}
		\operatorname{tr} \mathcal{L}_I^2 =&-2( n  \operatorname{Tr}  H^2- (\operatorname{Tr}  H)^2 ) 
		- i \sum_j\left( 2\operatorname{Tr}  L_j H\operatorname{Tr}  L_j^{\dagger} - 2 \operatorname{Tr}  L_j \operatorname{Tr}  L_j^{\dagger} H\right)\\
		&+\sum_{j,k}\left(\operatorname{Tr}  L_j L_k \operatorname{Tr}  L_j^{\dagger} L_k^{\dagger}  -  \operatorname{Tr}  L_k \operatorname{Tr}   L_k^{\dagger}  L_j^{\dagger} L_j     - \operatorname{Tr}  L_j^{\dagger} L_j L_k \operatorname{Tr}   L_k^{\dagger} \right)\\
		&+\frac12 \sum_{j,k} (\operatorname{Tr}  L_j^{\dagger} L_j \operatorname{Tr}  L_k^{\dagger} L_k + n  \operatorname{Tr}  L_j^{\dagger} L_j   L_k^{\dagger} L_k )\\
		=&  -2 n^2( \langle H^2 \rangle -\langle H \rangle^2) + n^2\sum_{j,k} |\langle L_j L_k \rangle|^2 + \frac12 n^2\bigl\langle ( \sum_j L_j^{\dagger} L_j )^2 \bigr\rangle + \frac12 n^2 \bigl\langle  \sum_j L_j^{\dagger} L_j  \bigr\rangle^2\\
		=& n^2 \left(-2 ( \langle H^2 \rangle -\langle H \rangle^2) + \sum_{j,k} |\langle L_j L_k \rangle|^2 + \frac12 \langle G^2 \rangle + \frac12  \langle  G \rangle^2\right).
	\end{align*}
	
\end{proof}

\begin{theorem}
	If $\mathcal{L}_I $ has GSKL form \eqref{eq:GKSLform} and $\operatorname{Tr}  L_j = 0$, then the second order generator  \eqref{eq:secOrdTwirlingAllUnitary} takes the form
	\begin{equation}\label{eq:secOrder}
		\mathfrak{K}(t)  =  -\tilde{\gamma}_{\lambda}(t)  \left(\mathcal{I} - \frac{I}{n} \operatorname{Tr}  (\; \cdot \;)\right)+ O(\lambda^3), \qquad \lambda \rightarrow 0,
	\end{equation}
	where
	\begin{align}
		&\tilde{\gamma}_{\lambda}(t) = \gamma(1-p) + \lambda \frac{n^2 }{n^2-1} \langle  G\rangle \nonumber\\
		& + \lambda^2(t-t_0) \left(\frac{n^2}{n^2-1}\biggl(2 ( \langle H^2 \rangle -\langle H \rangle^2) - \sum_{j,k} |\langle L_j L_k \rangle|^2 - \frac12 \langle G^2 \rangle - \frac12 \langle G \rangle^2\biggr) + \frac{n^4}{(n^2-1)^2} \langle  G\rangle^2\right) .\label{eq:plambdat}
	\end{align}
\end{theorem}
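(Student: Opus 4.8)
The plan is to show that every term in the second-order generator \eqref{eq:secOrdTwirlingAllUnitary} is a scalar multiple of the single fixed superoperator $\mathcal{Q} \equiv \mathcal{I} - \frac{I}{n}\operatorname{Tr}(\;\cdot\;)$, so that $\mathfrak{K}(t)$ collapses to $-\tilde{\gamma}_{\lambda}(t)\,\mathcal{Q}$ and $\tilde{\gamma}_{\lambda}(t)$ is read off as (minus) the sum of the three scalar coefficients. First I would rewrite the free part: since $\Lambda_p = p\,\mathcal{I} + (1-p)\frac{I}{n}\operatorname{Tr}(\;\cdot\;)$, definition \eqref{eq:generatorOfDepolarizing} gives $\mathcal{L}_0 = \gamma(\Lambda_p - \mathcal{I}) = -\gamma(1-p)\,\mathcal{Q}$, which already supplies the $\lambda^0$ contribution $\gamma(1-p)$ to $\tilde{\gamma}_{\lambda}$.

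For the first-order term, note that $\mathcal{L}_I$ annihilates the trace by hypothesis, so formula \eqref{eq:PL} applies and yields $\mathfrak{P}(\mathcal{L}_I) = \frac{\operatorname{tr}\mathcal{L}_I}{n^2-1}\,\mathcal{Q}$. Substituting the value $\operatorname{tr}\mathcal{L}_I = -n^2\langle G\rangle$ from Lemma \ref{lemm:GKSLtraces} produces the coefficient $-\frac{n^2}{n^2-1}\langle G\rangle$, i.e. the $\lambda^1$ contribution $\frac{n^2}{n^2-1}\langle G\rangle$ to $\tilde{\gamma}_{\lambda}$. The key preliminary observation for the second order is that $\mathcal{L}_I^2$ also annihilates the trace, since $\operatorname{Tr}\mathcal{L}_I^2(X) = \operatorname{Tr}\mathcal{L}_I(\mathcal{L}_I(X)) = 0$ for every $X$. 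Hence \eqref{eq:PL} again applies and gives $\mathfrak{P}(\mathcal{L}_I^2) = \frac{\operatorname{tr}\mathcal{L}_I^2}{n^2-1}\,\mathcal{Q}$.

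It remains to handle $(\mathfrak{P}(\mathcal{L}_I))^2$. Here I would use that $\mathcal{Q} = \mathcal{I} - \Lambda_0$, where $\Lambda_0 = \frac{I}{n}\operatorname{Tr}(\;\cdot\;)$, together with the semigroup law \eqref{eq:LambdaComm}, which gives $\Lambda_0^2 = \Lambda_0$ and hence $\mathcal{Q}^2 = (\mathcal{I}-\Lambda_0)^2 = \mathcal{I} - 2\Lambda_0 + \Lambda_0 = \mathcal{Q}$; that is, $\mathcal{Q}$ is idempotent. Consequently $(\mathfrak{P}(\mathcal{L}_I))^2 = \bigl(\frac{\operatorname{tr}\mathcal{L}_I}{n^2-1}\bigr)^2\mathcal{Q}^2 = \frac{n^4\langle G\rangle^2}{(n^2-1)^2}\,\mathcal{Q}$, which is exactly the term that generates the final $\frac{n^4}{(n^2-1)^2}\langle G\rangle^2$ summand in \eqref{eq:plambdat}. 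Collecting the three multiples of $\mathcal{Q}$ and inserting $\operatorname{tr}\mathcal{L}_I^2$ from Lemma \ref{lemm:GKSLtraces} then reproduces \eqref{eq:secOrder}–\eqref{eq:plambdat} after routine sign bookkeeping.

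The computation is essentially algebraic bookkeeping once two points are recognized, and these are the only places where care is needed: first, that $\mathcal{L}_I^2$ (not just $\mathcal{L}_I$) is trace-annihilating, which is what lets us reuse \eqref{eq:PL} rather than the more cumbersome \eqref{eq:fullAverProjExplicit}; and second, the idempotency $\mathcal{Q}^2 = \mathcal{Q}$, which collapses $(\mathfrak{P}(\mathcal{L}_I))^2$ to a single multiple of $\mathcal{Q}$ and is responsible for the characteristic $\langle G\rangle^2$ term with the squared denominator $(n^2-1)^2$. I do not anticipate any genuine obstacle beyond tracking the signs in passing from the coefficient of $\mathcal{Q}$ to the depolarization rate $\tilde{\gamma}_{\lambda}(t)$, where the overall minus sign in \eqref{eq:secOrder} flips the signs of the $\langle H^2\rangle$, $|\langle L_jL_k\rangle|^2$, $\langle G^2\rangle$ and $\langle G\rangle^2$ contributions coming from $\operatorname{tr}\mathcal{L}_I^2$.
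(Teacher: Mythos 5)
Your proposal is correct and takes essentially the same route as the paper: apply \eqref{eq:PL} together with Lemma \ref{lemm:GKSLtraces} to each term of \eqref{eq:secOrdTwirlingAllUnitary}, note that everything is a scalar multiple of $\mathcal{I} - \frac{I}{n}\operatorname{Tr}(\;\cdot\;)$, and collect coefficients. The two points you single out --- that $\mathcal{L}_I^2$ is also trace-annihilating (so \eqref{eq:PL} applies to it) and that $\mathcal{I} - \frac{I}{n}\operatorname{Tr}(\;\cdot\;)$ is idempotent (so $(\mathfrak{P}(\mathcal{L}_I))^2$ collapses to a single multiple of it) --- are exactly the facts the paper uses implicitly in its displayed computation.
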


\begin{proof}
	Using Equation \eqref{eq:PL} and Lemma \ref{lemm:GKSLtraces}, we have
	\begin{equation*}
		\mathfrak{P}(\mathcal{L}_I)  = \frac{\operatorname{tr}(\mathcal{L}_I) }{n^2-1} \left(\mathcal{I} - \frac{I}{n} \operatorname{Tr}  (\; \cdot \;)\right) =   - \frac{n^2 }{n^2-1} \langle  G\rangle \left(\mathcal{I} - \frac{I}{n} \operatorname{Tr}  (\; \cdot \;)\right)
	\end{equation*}
	and
	\begin{align*}
		&\mathfrak{P}(\mathcal{L}_I^2) - (\mathfrak{P}(\mathcal{L}_I))^2 =  \frac{(n^2-1)\operatorname{tr}(\mathcal{L}_I^2) -(\operatorname{tr}\mathcal{L}_I)^2}{(n^2-1)^2} \left(\mathcal{I} - \frac{I}{n} \operatorname{Tr}  (\; \cdot \;)\right)\\
		&=	\left(\frac{n^2}{n^2-1}\biggl(-2 ( \langle H^2 \rangle -\langle H \rangle^2) + \sum_{j,k} |\langle L_j L_k \rangle|^2 + \frac12 \langle G^2 \rangle + \frac12 \langle G \rangle^2\biggr)- \frac{n^4}{(n^2-1)^2} \langle  G\rangle^2\right) \times\\
		&\qquad \times \left(\mathcal{I} - \frac{I}{n} \operatorname{Tr}  (\; \cdot \;)\right).
	\end{align*}
	Substituting it into \eqref{eq:secOrdTwirlingAllUnitary} we obtain \eqref{eq:secOrder} with $\tilde{\gamma}_{\lambda}(t)$ defined by \eqref{eq:plambdat}.
\end{proof}

Remark that the first non-zero correction   to the leading order in \eqref{eq:plambdat} both in the dissipative case ($L_i \neq 0$) and in the non-trivial unitary case ($L_i \neq 0$, $H \neq 0$) is always positive. Thus, $\tilde{\gamma}_{\lambda}(t)$, which can be called  depolarization rate, increases due to GKSL perturbations to the free depolarizing dynamics.

Let us consider a general GKSL generator for a two-level system in  the weak coupling limit at equilibrium \cite[p. 83]{Alicki2007} as an example to illustrate Equation \eqref{eq:plambdat}:
\begin{align*}
	\mathcal{L}_I(\rho) =&- i[\omega_0 \sigma_+ \sigma_-, \rho]\\
	& +\gamma_0 (N+1) \left( \sigma_- \rho \sigma_+ - \frac12 \sigma_+ \sigma_- \rho - \frac12 \rho \sigma_+ \sigma_- \right)  \\
	&+\gamma_0 N \left( \sigma_+ \rho \sigma_- - \frac12 \sigma_- \sigma_+ \rho - \frac12 \rho \sigma_- \sigma_+ \right)\\
	&+\gamma_{\rm ph} (\sigma_z \rho \sigma_z - \rho),
\end{align*}
where $\gamma_0 \geqslant 0$, $\gamma_{\rm ph} \geqslant 0$, $N \geqslant 0$ and $\omega_0 \in \mathbb{R}$, then Equation \eqref{eq:plambdat} for depolarization rate takes the form
\begin{equation*}
	\tilde{\gamma}_{\lambda}(t)  = (1-p) \gamma + \lambda \frac{4}{3 } \left(\gamma_0 \left(N +  \frac{1}{2}\right) + \gamma_{\rm ph}\right) + \lambda^2 \frac{4 (2   \omega_0^2 - \gamma_{\rm ph}^2) - \gamma_0^2(2N+1)^2}{6}(t-t_0).
\end{equation*}
It  describes explicitly  the influence of thermal decay  (described by $\gamma_0$ and $N$), pure dephasing  (described by $\gamma_{\rm ph}$) and coherent oscillations (described by $\omega_0$) on the depolarization process.

\section{Conclusions}

We have derived the superoperator master equations for the twirling hyperprojector with respect to the whole unitary group. They describe  depolarizing dynamics, which is natural as the depolarizing channels are stationary points of such a hyperprojector. By our approach we describe the influence of GKSL perturbations on the depolarization process, calculating their contributions into depolarization rate.

From the point of view of general development of superoperator master equation methods it is interesting to mention the absorbing property of such a projector. It makes the superoperator master equation approach much closer to the usual one. So it is worth studying the class of hyperprojectors with the absorbing property. The fact that non-commutative free dynamics and interaction generators behave like commutative ones under the hyperprojector also seems to be an interesting property. So possibly some conditions for it can be found in a more general setup. 

And of course our work makes a step to  the superoperator master equations with twirling hyperprojectors in the case of other groups. They can be used to analyze the establishing rates for  the  symmetries correspondent to such groups in  open quantum systems.

\section*{Acknowledgments}

The author thanks F.~vom~Ende, A.\,Yu.~Karasev, E.\,A.~Kuryanovich and V.\,I.~Yashin for the fruitful discussion of the problems considered in the work.

\end{document}